\newtheorem{theorem}{Theorem}
\newtheorem{lemma}{Lemma}
\newtheorem{conjecture}{Conjecture}
\newtheorem*{definition}{Definition}
\newcommand{\note}[1]{\color{blue}{#1}\color{black}}
\newcommand{\at}[2]{#1[#2]}
\def\icr{ICR}
\def\Hi{H}
\def\Di{D}
\def\Pa{P}
\def\rep{rep}
\def\orbit{orbit}
\def\nxt{R}
\def\parent{parent}
\def\dif{\Delta}
\def\disc{discrepancy}
\definecolor{codegreen}{rgb}{0,0.6,0}
\definecolor{codegray}{rgb}{0.5,0.5,0.5}
\definecolor{codepurple}{rgb}{0.58,0,0.82}
\definecolor{backcolour}{rgb}{1.0,1.0,1.0}
\lstdefinestyle{mystyle}{
    backgroundcolor=\color{backcolour},
    commentstyle=\color{codegreen},
    keywordstyle=\color{magenta},
    numberstyle=\tiny\color{codegray},
    stringstyle=\color{codepurple},
    basicstyle=\ttfamily\scriptsize,
    breakatwhitespace=false,
    keepspaces,
    breaklines=true,
    captionpos=b,
    keepspaces=true,
    numbers=none,
    numbersep=5pt,
    showspaces=false,
    showstringspaces=false,
    showtabs=false,
    columns=fullflexible,
    tabsize=2
}
\DeclareMathOperator{\reps}{\mathbf{Reps}}
\begin{document}

\title{
De Bruijn Sequences with Minimum Discrepancy}

\author{
\small Nicolás Álvarez 
\and
\small
Verónica Becher 
\and 
\small Martín Mereb 
\and 
\small Ivo Pajor 
\and 
\small Carlos Miguel Soto
}

\maketitle

\begin{abstract}
The discrepancy of a binary string is the maximum (absolute) difference between the number
of ones and the number of zeroes over all possible substrings of the given binary string.
In this note we determine the minimal discrepancy that a binary de Bruijn sequence of order $n$ can achieve, which is $n$.
This was an open problem until now.
We give an algorithm that constructs a binary de Bruijn sequence with  minimal discrepancy.
A slight modification of this algorithm deals with arbitrary alphabets and yields de Bruijn sequences of order $n$  with discrepancy at most $1$ above the trivial lower bound $n$.
\end{abstract}

{\small \tableofcontents}

\section{Statement of Results}

A de Bruijn sequence  of order $n$ over a $b$-symbol alphabet  is a circular string of length $b^n$ such that every string of length $n$ occurs exactly once in it.
The discrepancy of a  given string  is a non negative integer that indicates the maximum difference, in any substring, between the number of occurrences of the  most occurring symbol 
and the least  occurring symbol in that substring.
Formally:  
let $\Sigma$ be an alphabet and $w$ a string over $\Sigma$. Let $|w|_a$ denote the number of occurrences of the symbol $a$ in $w$. 
Let $S(w)$ denote the set of all substrings of $w$ where we interpret $w$ as a circular string.
The discrepancy of a string $w$ over alphabet $\Sigma$  is defined as
\[
\disc(w) =
\underset{s\in S(w)}{\max} \left(
\underset{a\in \Sigma}{\max}\ |s|_a - \underset{c\in \Sigma}{\min}\ |s|_c\right).
\]

The maximum and minimum discrepancies attainable by a de Bruijn sequence of order~$n$ are $\Theta(2^n / \sqrt{n})$ and $\Theta(n)$ respectively. This was proved by Gabric and Sawada in 
\cite[Theorem 5.3]{Sawada} and~\cite[Theorem 2.2]{Sawada}.

What  is the exact  minimum discrepancy that a de Bruijn sequence can achieve?  
Clearly, in every alphabet,  the discrepancy of a de Bruijn sequence of order~$n$ is at least~$n$, because there must exist a run of 
$n$ consecutive equal symbols.
Here we show that for the binary alphabet the minimum discrepancy achievable by a de Bruijn sequence of order~$n$ is exactly~$n$.
We give an  efficient algorithm to compute it. For alphabets with more than two symbols, our algorithm constructs a de Bruijn sequence of order $n$ with discrepancy at most $n+1$.

\begin{theorem}\label{thm:main}
There is an algorithm  that produces a de Bruijn sequence of order $n$  with discrepancy $n$ in case the alphabet has two symbols, and with discrepancy at most $n+1$ in case the alphabet has more than two symbols. The algorithm computes in $O(n)$ memory and it outputs each symbol in $O(n)$ time.
\end{theorem}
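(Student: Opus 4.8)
The plan is to present the de Bruijn sequence through a \emph{successor rule}: an algorithm that keeps in memory only the last $n$ emitted symbols (a window) together with $O(\log n)$ bits of bookkeeping, and from this state alone decides the next symbol. Iterating the rule from a suitable seed then produces a periodic sequence, and the two things to establish are that its period equals $b^n$ and contains every length-$n$ word exactly once (the de Bruijn property) and that the running imbalance stays within the claimed bound.

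First I would reduce the discrepancy to a statement about prefix sums. In the binary case, encode $0\mapsto-1$ and $1\mapsto+1$, and let $P_i$ be the $i$-th prefix sum of the cyclic output, with $P_0=0$. A de Bruijn sequence of order $n$ is balanced, with $2^{n-1}$ zeroes and $2^{n-1}$ ones, so the signed weight of any cyclic substring, whether it wraps around or not, equals $P_j-P_i$ for suitable $i,j$; hence $\disc(w)=\max_i P_i-\min_i P_i$. The run $0^n$ (and $1^n$) forces this range to have width at least $n$, which is the trivial lower bound already noted, so it suffices to design the rule so that all $P_i$ lie in one fixed block of $n+1$ consecutive integers and, in the binary case, so that the attained levels form an interval of width exactly $n$. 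This dictates the design principle: when the current level is in the interior of its allowed block, emit the symbol that moves $P$ back toward the centre; near the boundary, and at a small distinguished set of states, override this greedy choice so that the traversal closes up into a single cycle.

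Second --- and this is the heart of the matter --- I would prove the de Bruijn property via cycle joining. One starts from the pure cycling register, whose cycles are the necklace classes of length-$n$ words, and uses the standard lemma: if a shift register has disjoint cycles $C,C'$ and a conjugate pair $(a u,\ \overline{a}u)$ with $a u\in C$ and $\overline{a}u\in C'$, then swapping the successors of these two states merges $C$ and $C'$. A family of conjugate pairs forming a spanning tree of the induced graph on necklaces therefore produces a single cycle of length $b^n$, realised by a successor rule. The task is to choose this spanning tree so that the joins are compatible with the level control above; here it helps that the two necklaces of a conjugate pair differ in a single coordinate and hence by exactly one unit of weight, so joins naturally connect classes of adjacent weight and can be scheduled to keep the level oscillating. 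The main obstacle is proving that the chosen family of conjugate pairs really is connected, via a weight-by-weight induction showing every necklace is reachable, while checking that the boundary overrides never break a scheduled join nor are forced to leave the allowed block of levels.

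Finally I would assemble the pieces. Given the explicit spanning-tree description, the discrepancy bound follows by an induction along the traversal establishing the invariant that the level stays in the prescribed window, the only delicate points being the runs $0^n$ and $1^n$ and the overrides at the distinguished states, which must enter and leave the window exactly at its extremes. For the complexity: the state is the $n$-symbol window plus an $O(\log n)$-bit level counter, giving $O(n)$ memory, and one application of the successor rule --- recognising whether the window is a necklace representative or a distinguished state, and updating or rotating it --- costs $O(n)$ time, hence $O(n)$ time per emitted symbol. The case $b>2$ is the same construction over a $b$-letter alphabet, with necklaces and conjugate pairs of $b$-ary words and a suitable potential replacing the $\pm1$ encoding; the loss of the exact value $n$, that is, the extra unit of slack, reflects the absence of the $0\leftrightarrow1$ symmetry and the parity coincidences it provides in the binary case.
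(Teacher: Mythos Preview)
Your overall architecture---decompose the de Bruijn graph into shift-register cycles, join them along a spanning tree via a successor rule, and control an additive potential---matches the paper's. But the central choice differs, and the consequences are not cosmetic. You take the \emph{pure} cycling register, whose cycles are necklaces; the paper takes the \emph{incremented} cycling register $\mathrm{ICR}(s)=s[1]\cdots s[n-1](s[0]{+}1)$. The paper does not track a running prefix level at all. Instead it equips each ICR-cycle with a depth $d_s\in\Sigma$, defines a ``valid subgraph'' of arcs compatible with these depths, and proves the telescoping identity $\sum_{i<k} e_{s_i[0]}\equiv P(H(s_0)+e_{d_{s_0}})-P(H(s_k)+e_{d_{s_k}})\bmod K$ for any path in it. The histogram of emitted symbols thus depends only on the two endpoints, and the bounds $n+1$ (general) and $n$ (binary) follow from a pointwise estimate on $D\bigl(P(H(s)-H(t))\bigr)$ for two length-$n$ words---no induction along the traversal, no greedy-with-overrides analysis. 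ICR is precisely what makes this endpoint-only identity absorb the cycle joins; under PCR the analogous telescoping has no depth term to swallow them.

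The more serious issue is that the proposal stops exactly where the work begins. You do not specify the spanning tree, the ``distinguished states'', or the override rule, and the discrepancy bound is asserted (``the only delicate points being the runs $0^n$ and $1^n$\ldots'') rather than proved. The paper's substance is these details: an explicit tree on ICR-cycles defined through difference arrays, a concrete rule $R(s)\in\{\mathrm{ICR}_0(s),\mathrm{ICR}_1(s),\mathrm{ICR}_2(s)\}$ with proofs of bijectivity and transitivity, and the reason the binary bound is $n$ rather than $n+1$---namely that $0^n$ and $1^n$ lie in the \emph{same} ICR orbit, hence receive the same depth, so any two words of different depth automatically share a symbol and Lemma~\ref{thm:dpartial2} applies. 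Your ``$0\leftrightarrow 1$ symmetry and parity coincidences'' does not capture this; under PCR, $0^n$ and $1^n$ sit in \emph{different} necklaces, which already suggests the PCR route would need a genuinely different mechanism to close the last unit of discrepancy, and none is offered.
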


Thus, in the case of the binary alphabet our algorithm yields minimal discrepancy. 
For larger alphabets we experimentally found that for some values of $n$ there exists a de Bruijn sequence of order $n$ with discrepancy $n$, but not for others.

Gabric and Sawada in~\cite{Sawada} ask whether 
Huang's~\cite{Huang} algorithm  produces binary de Bruijn sequences with minimum discrepancy.
The answer is no, but a variant of it does: our algorithm is a variant of Huang's algorithm, twisted to achieve minimum discrepancy.
Our algorithm works for any alphabet, thereby, it solves the problem posed by Gabric and Sawada of determining whether the discrepancy bounds for the binary alphabet hold for arbitrary alphabets  as well.

\begin{section}{Structure of the Proof of Theorem 1}
    The structure of the algorithm and its correctness proof is as follows:
    \begin{itemize}
        \item First, we translate the problem to the language of de Bruijn graphs, using the well known result that Hamiltonian cycles in the de Bruijn graph correspond to de Bruijn sequences~\cite{dB}. 
        
        \item We define a 
        subgraph of the de Bruijn graph, called the valid subgraph, where all arcs are of a certain form and every path in that graph corresponds to a string with bounded discrepancy.
       
        \item We decompose the de Bruijn graph into node-disjoint cycles using the incremented cycle register rule, as is done by Huang~\cite{Huang}. We prove that these cycles are all contained within the valid subgraph.
        
        \item We choose a way to connect these cycles in a tree-like structure to form a Hamiltonian cycle, such that the arcs between the cycles are also contained in the valid subgraph.

        \item Finally, we develop an algorithm to traverse this tree in a depth first manner, completely traversing all cycles and thus producing a Hamiltonian cycle. Since it is contained within the valid subgraph, this guarantees that the resulting Hamiltonian cycle has bounded discrepancy.
    \end{itemize}
\end{section}

\section{De Bruijn Graph and Discrepancy}
Given an integer $n$ and an alphabet $\Sigma = \{0,1,\dots,k-1\}$, where $k$ is a positive integer, the de Bruijn graph $B_n = (V_n, A_n)$ is a graph with node set $V_n = \Sigma^n$, the set of all strings of length $n$, and an arc $(s, t) \in A_n$ if and only if the suffix of $s$ of length $n-1$ equals the prefix of $t$ of length $n-1$. The alphabet is interpreted modulo $k$. For example, for $k=3$ we take $1+2 = 0$.

Hamiltonian cycles in the de Bruijn graph correspond to de Bruijn sequences:
if $s_0, s_1, \dots, s_{k-1}$ is a Hamiltonian cycle in $B_n$, then the sequence
\[
\at{s_0}{0}, \at{s_1}{0}, \dots, \at{s_{k-1}}{0}
\]
consisting of the first symbol of each string is a de Bruijn sequence. Thus, each string of length $n$ appears exactly once in the sequence when viewed cyclically.

\begin{definition}[Difference]
    The discrepancy of a string $w$ is defined as, $\disc:\Sigma^*\to \mathbb{Z} $
    \[
        \disc(w) =
        \underset{s\in S(w)}{\max} \left(
            \underset{a\in \Sigma}{\max}\ |s|_a - \underset{c\in \Sigma}{\min}\ |s|_c\right).
            \] 
    where the value $\underset{a\in \Sigma}{\max}\ |s|_a - \underset{c\in \Sigma}{\min}\ |s|_c$ is the \emph{difference} of the substring $s$, denoted $\Di(s)$.
\end{definition}
            
In order to bound the discrepancy of de Bruijn sequences, we need to bound the difference of all its substrings. A substring of a de Bruijn sequence is a sequence of the form
\[
\at{s_{i}}{0}, \at{s_{i+1}}{0}, \dots, \at{s_{j}}{0}
\]
where $s_i, s_{i+1}, \dots, s_j$ is a contiguous subsequence of the Hamiltonian cycle. Therefore, substrings of a de Bruijn sequence correspond to paths in the de Bruijn graph.

\section{Valid Subgraph}
\subsection{Histograms}

Let us define a concept that will help us to calculate the difference of a string.

\begin{definition}[Histogram]
Given a string $s \in \Sigma^*$ let $\Hi(s) : \Sigma \to \mathbb{Z}$ be defined as $\Hi(s)(c) = |s|_c$.
\end{definition}

The difference of a histogram is defined as follows.

\begin{definition}[Difference of a histogram]
The difference $\Di(\Hi)$ of a histogram $\Hi : \Sigma \to \mathbb{Z}$ is given by
\[\Di(\Hi) = \max_{i,j \in \Sigma} \Hi(i) - \Hi(j).
\]
\end{definition}

As expected, $\Di(\Hi(s)) = \Di(s)$. We use the same symbol to denote the difference of a string and the difference of a histogram. However, the meaning of the symbol will be clear from the context in which it is used. Another operation that we need on histograms is the \emph{partial sum}.

\begin{definition}[Partial sum of a histogram]
    Given $\Hi : \Sigma \to \mathbb{Z}$, define its partial sum $\Pa(\Hi): \Sigma \to \mathbb{Z}$ as 
    \[\Pa(\Hi)(i) = \sum_{j = 0}^{i} \Hi(j).
    \]
\end{definition}

Let $K$ be the constant one histogram. 
That is, $K : \Sigma \to \mathbb{Z}$ such that $K(b) = 1$ for all $b \in \Sigma$.
Notice that if we add $K$ to a histogram, the difference is not affected. Therefore, we can consider the histograms in the quotient space
\[
\frac{\Sigma \to \mathbb{Z}}{\langle K \rangle}.
\]

\subsection{Incremented Cycle Register Rule}
\begin{definition}[Incremented Cycle Register Rule]
    Given a positive integer $k$, we define $\icr_k: \Sigma^n \rightarrow \Sigma^n$ as given by
    \[\icr_k(s) = \at{s}{1}\at{s}{2}\dots \at{s}{n-1} (\at{s}{0}+k).
    \] 
    Then, $\icr_k(s)$ is the string consisting of a cyclic rotation of the original $s$, but where the last symbol is incremented by $k$. Unless otherwise specified, $\icr = \icr_1$.
\end{definition}

Consider the subgraph of $B_n$ given by 
$(\Sigma^n, \{(s, ICR(s)):s\in \Sigma^n\})$, which we call the 
ICR-subgraph of $B_n$.  Since $\icr$ is a bijective function, the ICR-subgraph consists of a collection of node-disjoint cycles.

Consider a path $s_0, s_1, \dots, s_k$ in the $ICR$-subgraph. Can we reconstruct the histogram of first symbols of these strings knowing only $s_0$ and $s_k$?

\begin{lemma} \label{lemma:P}
    Let $s \in \Sigma^n$ and let $b = \at{s}{0}$ be the first symbol of $s$, then
    \[\Pa(\Hi(s)) - \Pa(\Hi(\icr(s))) \equiv e_b \mod K,\]
where $e_b$ is the indicator function at $b$.
\end{lemma}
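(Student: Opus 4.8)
The plan is to unwind the definitions and reduce the claim to a short computation on first symbols. Write $s = b\,c_1 c_2 \cdots c_{n-1}$ with $b = \at{s}{0}$, so that $\icr(s) = c_1 c_2 \cdots c_{n-1}\,(b+1)$, where $b+1$ is taken modulo $k$. The two strings $s$ and $\icr(s)$ share the multiset $\{c_1,\dots,c_{n-1}\}$ of symbols; they differ only in that $s$ additionally contains one occurrence of $b$, while $\icr(s)$ additionally contains one occurrence of $b+1 \bmod k$. Hence, as histograms in the quotient, $\Hi(s) - \Hi(\icr(s)) = e_b - e_{b+1 \bmod k}$, where indices are read modulo $k$ and $e_j$ is the indicator at $j$.

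Next I would push this difference through the partial-sum operator $\Pa$, which is linear, so $\Pa(\Hi(s)) - \Pa(\Hi(\icr(s))) = \Pa(e_b) - \Pa(e_{b+1 \bmod k})$. Now $\Pa(e_j)(i) = \sum_{\ell=0}^{i} e_j(\ell)$ is the ``step function'' that is $0$ for $i < j$ and $1$ for $i \ge j$. Two cases arise. If $b \ne k-1$, then $b+1 \bmod k = b+1$ and $\Pa(e_b) - \Pa(e_{b+1})$ is $1$ exactly at the single coordinate $i = b$ and $0$ elsewhere, i.e.\ it equals $e_b$ on the nose — so the congruence holds even before quotienting. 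If $b = k-1$, then $b+1 \bmod k = 0$, and $\Pa(e_{k-1}) - \Pa(e_0)$ equals $1$ at coordinate $i = k-1$ and $-1$ at every coordinate $0 \le i \le k-2$; adding the constant histogram $K$ (allowed, since we work modulo $\langle K\rangle$) turns this into the function that is $2$ at coordinate $k-1$ and $0$ elsewhere, which is not literally $e_{k-1}$ but — wait, that is off by one. The correct bookkeeping: $\Pa(e_{k-1}) - \Pa(e_0) + K$ has value $1 - 1 + 1 = 1$ at $i = k-1$ and $0 - 1 + 1 = 0$ at $i \le k-2$, i.e.\ it is exactly $e_{k-1} = e_b$. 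So in both cases $\Pa(\Hi(s)) - \Pa(\Hi(\icr(s))) \equiv e_b \pmod K$, as claimed.

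The only subtlety, and the one step I would be careful about, is the wrap-around case $b = k-1$: there the naive identity $\Pa(e_b) - \Pa(e_{b+1}) = e_b$ fails because $e_{b+1}$ is really $e_0$, and one genuinely needs the freedom to adjust by a multiple of $K$ to recover $e_b$. Everything else — linearity of $\Hi$ and $\Pa$, and the elementary shape of $\Pa(e_j)$ — is routine. I would therefore present the argument as: (i) compute $\Hi(s) - \Hi(\icr(s)) = e_b - e_{b+1\bmod k}$; (ii) apply $\Pa$ and use linearity; (iii) split on whether $b = k-1$, handling the wrap case by adding $K$. This yields the lemma.
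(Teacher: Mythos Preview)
Your argument is correct and follows the same route as the paper's proof: compute $\Hi(s)-\Hi(\icr(s))=e_b-e_{b+1}$, apply $\Pa$ by linearity, and split on whether $b+1$ wraps around to $0$, in which case one adjusts by $K$. Aside from the momentary miscount (which you immediately fix with the correct values $1-1+1$ and $0-1+1$), the bookkeeping matches the paper's more compact observation that $\Pa(e_b-e_0)=e_b-K$.
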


\begin{proof}
By definition, $\Hi(s) - \Hi(\icr(s)) = e_b - e_{b+1}$.
If $b+1 \neq 0$, then $\Pa(e_b - e_{b+1}) = e_b$ and by linearity of $P$ we are done. If $b+1 = 0$, then $\Pa(e_b - e_{b+1}) = e_b - K$.
\end{proof}

\begin{lemma}
Let $s_0, \dots, s_k$ be a sequence of strings such that $s_{i+1} = \icr(s_i)$ for all $i$. Then the following equality holds
\[\Pa(\Hi(s_0)) - \Pa(\Hi(s_k)) \equiv \sum_{i=0}^{k-1} e_{\at{s_i}{0}} \mod K.
\]
\end{lemma}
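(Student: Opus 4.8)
The plan is to prove this by induction on $k$, using Lemma~\ref{lemma:P} as the base case and inductive step simultaneously. For $k=0$ the claimed equality reads $\Pa(\Hi(s_0)) - \Pa(\Hi(s_0)) \equiv 0 \bmod K$, which holds trivially since the empty sum is zero. So the real content is the inductive step.

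For the inductive step, assume the equality holds for $k$, and consider a sequence $s_0, \dots, s_{k+1}$ with $s_{i+1} = \icr(s_i)$. The plan is to write
\[
\Pa(\Hi(s_0)) - \Pa(\Hi(s_{k+1})) = \bigl(\Pa(\Hi(s_0)) - \Pa(\Hi(s_k))\bigr) + \bigl(\Pa(\Hi(s_k)) - \Pa(\Hi(s_{k+1}))\bigr).
\]
The first bracket is congruent to $\sum_{i=0}^{k-1} e_{\at{s_i}{0}}$ modulo $K$ by the induction hypothesis (applied to the prefix $s_0, \dots, s_k$). The second bracket is congruent to $e_{\at{s_k}{0}}$ modulo $K$ by Lemma~\ref{lemma:P} applied to the string $s = s_k$, since $s_{k+1} = \icr(s_k)$. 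Adding these and noting that $\sum_{i=0}^{k-1} e_{\at{s_i}{0}} + e_{\at{s_k}{0}} = \sum_{i=0}^{k} e_{\at{s_i}{0}}$ gives exactly the desired congruence for $k+1$.

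The only point requiring a little care is that congruences modulo $\langle K \rangle$ are additive: if $x \equiv x' \bmod K$ and $y \equiv y' \bmod K$ then $x + y \equiv x' + y' \bmod K$. This is immediate because $\langle K \rangle$ is a subgroup of the abelian group of functions $\Sigma \to \mathbb{Z}$, so working in the quotient space introduced just before the Incremented Cycle Register Rule subsection is legitimate and the sum in the quotient is well defined. I do not anticipate any real obstacle here; the lemma is essentially a telescoping sum dressed up in the quotient-space language, and the main thing to get right is bookkeeping the index ranges so that the partial sums telescope cleanly and the exponent set of the $e$'s comes out as $\{0, 1, \dots, k-1\}$ rather than off by one.
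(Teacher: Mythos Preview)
Your proof is correct and is essentially the same as the paper's: the paper simply says ``Apply Lemma~\ref{lemma:P} and use the telescopic sum,'' and your induction argument is precisely that telescoping sum spelled out in detail.
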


\begin{proof}
    Apply Lemma~\ref{lemma:P} and use the telescopic sum.
\end{proof}

\subsection{Valid Subgraph}
Since the ICR-subgraph of $B_n$ is not enough to build a Hamiltonian cycle, 
we introduce another degree of freedom.
We consider  $d:\Sigma^n\to\Sigma$ to be a depth assignment, and for each $s\in\Sigma^n$, $d_s$ is its depth.

\begin{definition}[Valid Subgraph]
Given a depth assignment $d : \Sigma^n \to \Sigma$, an arc $(s, t)$ in the de Bruijn graph $B_n$ is called \emph{valid} if, for  $b = \at{s}{0}$ and $c = \at{t}{n-1}$,
\begin{itemize}
\item $b+1 = c$ and $d_s = d_t$, or
\item $b+1 = d_t$ and $c = d_s$
\end{itemize}
The set of valid arcs in $B_n$ form the valid subgraph.
\end{definition}

In Figure~\ref{fig:valid} there is an example of a valid subgraph.

\begin{figure}[htbp]
  \centering
  \subfloat[Valid subgraph. The dashed lines are the non-valid edges and the lines with arrows are the valid edges that don't belong to any $\icr$ cycle. Solid colored lines are the $\icr$ cycles.]{
    \includegraphics[scale=0.40]{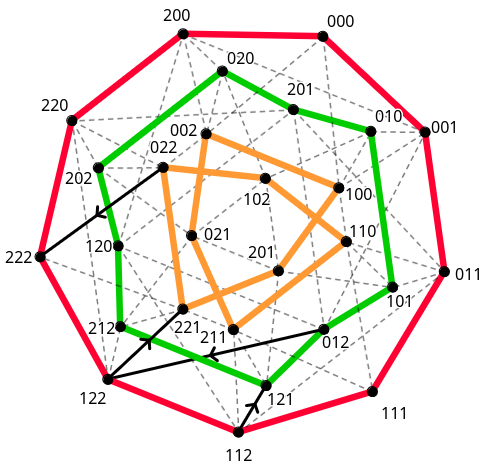}
   }
   \hspace{10pt}
     \subfloat[Hamiltonian cycle subgraph of the valid subgraph. Solid lines form the Hamiltonian cycle.]{

    \includegraphics[scale=0.40]{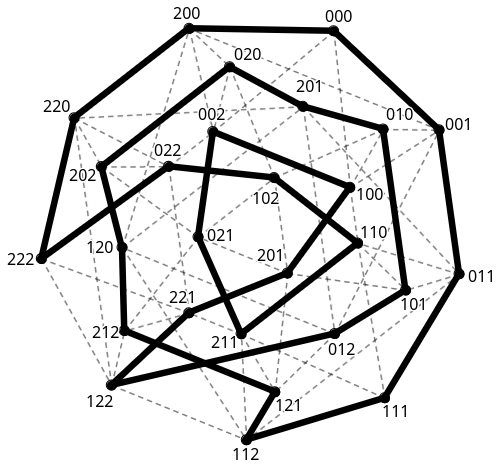}
    }
  \caption{An example of a valid subgraph for $|\Sigma| = 3$ and $n = 3$ where the $\icr$ cycle of $000$ is assigned depth $1$ and the other two $\icr$ cycles are assigned depth $2$.}
  \label{fig:valid}
\end{figure}

\begin{lemma}
    \label{validarcsprop}
    Given a depth assignment $d : \Sigma^n \to \Sigma$, and $s,t \in \Sigma^n$ such that $(s,t)$ is a valid arc in the de Bruijn graph then:
    \[e_{\at{s}{0}} \equiv \Pa(\Hi(s) + e_{d_s}) - \Pa(\Hi(t) + e_{d_t}) \mod K.\]
\end{lemma}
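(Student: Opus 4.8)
The plan is to unwind the two cases in the definition of a valid arc and show that in \emph{both} of them the combined expression $\big(\Hi(s)+e_{d_s}\big)-\big(\Hi(t)+e_{d_t}\big)$ collapses to the single difference $e_b - e_{b+1}$; after that, one application of the (linear) partial–sum operator $\Pa$ together with the computation already carried out in the proof of Lemma~\ref{lemma:P} finishes the argument.

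First I would set up notation exactly as for a de Bruijn arc. Since $(s,t)\in A_n$, the length-$(n-1)$ suffix of $s$ equals the length-$(n-1)$ prefix of $t$, so $s = b\,u$ and $t = u\,c$ for a common word $u$ of length $n-1$, where $b=\at{s}{0}$ and $c=\at{t}{n-1}$. This gives the basic identity $\Hi(t) = \Hi(s) - e_b + e_c$ as functions $\Sigma\to\mathbb{Z}$, hence $\Hi(s)-\Hi(t) = e_b - e_c$. Now I split according to which clause of validity holds. In the first case $b+1=c$ and $d_s=d_t$: the terms $e_{d_s}$ and $e_{d_t}$ cancel, so $\big(\Hi(s)+e_{d_s}\big)-\big(\Hi(t)+e_{d_t}\big) = \Hi(s)-\Hi(t) = e_b - e_c = e_b - e_{b+1}$. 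In the second case $b+1=d_t$ and $c=d_s$: substituting $e_{d_s}=e_c$ and $e_{d_t}=e_{b+1}$ gives $\big(\Hi(s)+e_{d_s}\big)-\big(\Hi(t)+e_{d_t}\big) = (e_b - e_c) + e_c - e_{b+1} = e_b - e_{b+1}$. Either way the difference of the two shifted histograms equals $e_b - e_{b+1}$.

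Finally I apply $\Pa$, using its linearity, and invoke the case analysis inside the proof of Lemma~\ref{lemma:P}: $\Pa(e_b - e_{b+1}) = e_b$ when $b+1\neq 0$, and $\Pa(e_b - e_{b+1}) = e_b - K$ when $b+1 = 0$. In both situations $\Pa\big(\Hi(s)+e_{d_s}\big) - \Pa\big(\Hi(t)+e_{d_t}\big) \equiv e_b \pmod K$, which is the asserted equality. I do not expect a genuine obstacle here; the only point requiring care is that the index $b+1$ is taken modulo $k$, and the wrap-around case $b+1=0$ is precisely what the ``$\pmod K$'' in the statement absorbs, exactly as in Lemma~\ref{lemma:P}. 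A minor bookkeeping check is that in the first case the cancellation $e_{d_s}=e_{d_t}$ is an equality of functions (not merely mod $K$), so no hidden multiple of $K$ is introduced there.
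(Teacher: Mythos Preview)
Your proof is correct and follows essentially the same approach as the paper: both arguments reduce $(\Hi(s)+e_{d_s})-(\Hi(t)+e_{d_t})$ to $e_b-e_{b+1}$ via the identity $\Hi(s)-\Hi(t)=e_b-e_c$ and the two validity clauses, then apply $\Pa$ and the computation from Lemma~\ref{lemma:P}. The only difference is that you spell out the two cases explicitly, whereas the paper compresses them into the single observation $e_b-e_c+e_{d_s}-e_{d_t}=e_b-e_{b+1}$.
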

\begin{proof}
     Let $b = \at{s}{0}$ and $c = \at{t}{n-1}$. 
 Observe  that for any valid arc:
     $$e_b - e_c + e_{d_s} - e_{d_t} = e_b - e_{b+1}.$$
     Now,
    \begin{align*}
        \Pa(\Hi(s) + e_{d_s}) - \Pa(\Hi(t) + e_{d_t})  &\equiv \Pa\left(\Hi(s) - \Hi(t) + e_{d_s} - e_{d_t}\right) &\mod K\\
        &\equiv \Pa(e_b - e_c + e_{d_s} - e_{d_t}) &\mod K \\
        &\equiv \Pa(e_b - e_{b+1}) &\mod K \\
        &\equiv e_b \equiv e_{s[0]} &\mod K.\ \ \qedhere
\end{align*}
\end{proof}


%

If the depth assignment is constant in the cycles produced by the ICR rule, then the ICR-subgraph is a subgraph of the valid subgraph. If we have any path in the valid subgraph, then a telescoping equality holds.

\begin{lemma}
    \label{thm:bvalid}
    Let $d : \Sigma^n \to \Sigma$ be a depth assignment and let $s_0, \dots, s_k$ be a path in the valid subgraph, then 
\[
\sum_{i=0}^{k-1} e_{\at{s_i}{0}} \equiv \Pa\left(\Hi(s_{0}) + e_{d_{s_{0}}}\right) - \Pa\left(\Hi(s_{k}) + e_{d_{s_{k}}}\right) \mod K.
\]
\end{lemma}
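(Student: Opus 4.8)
The plan is to obtain the identity by a straightforward telescoping argument built on Lemma~\ref{validarcsprop}. Since $s_0, \dots, s_k$ is a path in the valid subgraph, for every index $i \in \{0, \dots, k-1\}$ the arc $(s_i, s_{i+1})$ is valid, so Lemma~\ref{validarcsprop} applies to it and gives
\[
e_{\at{s_i}{0}} \equiv \Pa\!\left(\Hi(s_i) + e_{d_{s_i}}\right) - \Pa\!\left(\Hi(s_{i+1}) + e_{d_{s_{i+1}}}\right) \mod K .
\]

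Next I would add these $k$ congruences together. Congruence modulo $\langle K \rangle$ is compatible with addition — we are really working inside the quotient abelian group $(\Sigma \to \mathbb{Z})/\langle K \rangle$ — so summing over $i$ yields
\[
\sum_{i=0}^{k-1} e_{\at{s_i}{0}} \equiv \sum_{i=0}^{k-1}\left( \Pa\!\left(\Hi(s_i) + e_{d_{s_i}}\right) - \Pa\!\left(\Hi(s_{i+1}) + e_{d_{s_{i+1}}}\right) \right) \mod K .
\]
On the right-hand side every term $\Pa(\Hi(s_i) + e_{d_{s_i}})$ with $1 \le i \le k-1$ occurs once with a plus sign and once with a minus sign, so the sum collapses to $\Pa(\Hi(s_0) + e_{d_{s_0}}) - \Pa(\Hi(s_k) + e_{d_{s_k}})$, which is precisely the asserted equality (and for $k = 0$ both sides are $0$, so the statement holds trivially).

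I do not expect any real obstacle: the entire content of the statement is already packaged in Lemma~\ref{validarcsprop}, and the only point worth making explicit is that the cancellation is performed either at the level of genuine functions $\Sigma \to \mathbb{Z}$ or, equivalently, inside the quotient group, so that no subtlety concerning the modulus $K$ arises when the individual congruences are summed.
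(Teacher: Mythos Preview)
Your proposal is correct and matches the paper's approach: the paper states this lemma without an explicit proof, simply noting that ``a telescoping equality holds'' after proving Lemma~\ref{validarcsprop}, which is exactly the argument you have written out in full.
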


The histogram of the sequence of first symbols in a path 
in the valid subgraph on $B_n$ 
depends only on the initial and final string. 
Observe that in the formula for Lemma~\ref{thm:bvalid}, the histogram ignores the first symbol of the last string (the summation only goes up to $k-1$). This is not a problem for our application: Suppose that, for every pair of strings $s, t \in \Sigma^n$ we manage to prove a bound
\[    \Di\left(\Pa\left(\Hi(s) + e_{d_{s}}\right) - \Pa\left(\Hi(t) + e_{d_{t}}\right)\right) \leq F, \quad \quad F \in \mathbb{N}
\]
Then, every de Bruijn sequence that arises from a Hamiltonian cycle $s_0,\dots,s_k$ in the valid subgraph, has discrepancy-bound $F$. Indeed, if we consider the substring $\at{s_i}{0},\at{s_{i+1}}{0}\dots,\at{s_j}{0}$, then we can bound
\[\Di(\Hi(\at{s_i}{0},\at{s_{i+1}}{0}\dots,\at{s_j}{0}))\]
by applying 
Lemma~\ref{thm:bvalid} on the path $s_i,s_{i+1},\dots,s_j,s_{j+1}$ with an extra past-the-end element.

\subsection{Difference Bounds for Valid Subgraph Paths}

We start by ignoring the depths.

\begin{lemma}[Difference of partial sums of histograms]
    \label{thm:dpartial}
    If $s, t \in \Sigma^n$ then
    \[\Di(\Pa(\Hi(s) - \Hi(t))) \leq n.\]
\end{lemma}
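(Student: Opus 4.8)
The plan is to unfold all the operators involved and reduce the statement to an elementary counting fact about the single auxiliary function $g = \Hi(s) - \Hi(t) : \Sigma \to \mathbb{Z}$. The key point is that, although $g$ looks like an arbitrary integer-valued function, it is a difference of two genuine histograms of strings of length $n$, so its partial sums over any block of $\Sigma$ are squeezed between $-n$ and $n$ in a way that the crude estimate ``each coordinate of $\Pa(g)$ lies in $[-n,n]$'' (which would only yield $2n$) misses.

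Concretely, I would write $\Sigma = \{0, 1, \dots, k-1\}$ with its natural order, recall that $\Pa(g)(i) = \sum_{j=0}^{i} g(j)$ and $\Di(\Pa(g)) = \max_{i,j \in \Sigma}\big(\Pa(g)(i) - \Pa(g)(j)\big)$, and then prove that $|\Pa(g)(i) - \Pa(g)(j)| \le n$ for every pair $i, j \in \Sigma$. For $i > j$ the difference telescopes:
\[
\Pa(g)(i) - \Pa(g)(j) \;=\; \sum_{\ell = j+1}^{i} g(\ell) \;=\; \sum_{\ell = j+1}^{i} \Hi(s)(\ell) \;-\; \sum_{\ell = j+1}^{i} \Hi(t)(\ell),
\]
and here $\sum_{\ell=j+1}^{i}\Hi(s)(\ell)$ simply counts the occurrences in $s$ of the symbols from the block $\{j+1, \dots, i\}$, so it is an integer in $[0,n]$ because $|s| = n$; the same holds for the $t$-sum. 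A difference of two integers each lying in $[0,n]$ lies in $[-n,n]$. The remaining cases are immediate: $i = j$ gives difference $0$, and $i < j$ is symmetric upon interchanging the roles of $s$ and $t$. Taking the maximum over all $i, j \in \Sigma$ then yields $\Di(\Pa(\Hi(s) - \Hi(t))) \le n$, as claimed.

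I do not expect a genuine obstacle here; the whole argument is a short counting estimate. The one place to stay alert is precisely the step flagged above — resisting the temptation to bound each entry of $\Pa(g)$ in isolation, and instead exploiting the cancellation inside the telescoped block sum, which is what keeps the bound at $n$ rather than $2n$. It may also be worth a one-line remark that the linear order imposed on $\Sigma$ in order to define $\Pa$ plays no role in the final inequality, since the argument above goes through verbatim for any order on $\Sigma$.
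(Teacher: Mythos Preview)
Your argument is correct, but it is organized differently from the paper's.

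The paper decomposes position-by-position: it writes $\Hi(s) - \Hi(t) = \sum_{i=0}^{n-1}(e_{\at{s}{i}} - e_{\at{t}{i}})$, observes that each $\Pa(e_{\at{s}{i}} - e_{\at{t}{i}})$ takes values only in $\{0,1\}$ or only in $\{0,-1\}$ and hence has difference at most $1$, and then applies subadditivity of $\Di$ to get the bound $n$. You instead fix two output indices $i>j$, telescope $\Pa(g)(i)-\Pa(g)(j)$ into a block sum $\sum_{\ell=j+1}^{i} g(\ell)$, and bound that block sum as a difference of two counts in $[0,n]$. Both are elementary and equally short.

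The paper's decomposition pays off in the next lemma (shared-symbol case, Lemma~\ref{thm:dpartial2}): one summand drops out verbatim and the bound becomes $n-1$ with no extra work. Your block-count argument also yields $n-1$ there, but requires a small case split on whether the shared symbol lies in the block $\{j+1,\dots,i\}$ or not. Conversely, your approach makes the role of the partial-sum operator very transparent (it is literally ``count symbols in an interval''), which is a nice conceptual gain. Your closing remark about the order on $\Sigma$ being irrelevant is correct and worth keeping.
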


\begin{proof}
    The histogram $\Hi(s)$ can be written as a sum over all symbols of $s$,
    \[
    \Hi(s) = \sum_{i=0}^{n-1} e_{\at{s}{i}}.
    \]
    Similarly,
    \[\Hi(t) = \sum_{i=0}^{n-1} e_{\at{t}{i}}.
    \]
    Therefore,
    \[\Pa(\Hi(s) - \Hi(t)) = \sum_{i=0}^{n-1} \Pa(e_{\at{s}{i}} - e_{\at{t}{i}}).
    \]
    Notice that $\Pa(e_{\at{s}{i}} - e_{\at{t}{i}})$ has a difference of at most $1$ 
    because its values are either all $0$s and $1$s, or all $0$s and $-1$s. Therefore, due to subadditivity we have the desired bound.
\end{proof}

\begin{lemma}[Difference of partial sums of histograms with shared symbol]
    \label{thm:dpartial2}
    If $s, t \in \Sigma^n$ and there exists a symbol that appears in both strings, then
    \[\Di(\Pa(\Hi(s) - \Hi(t))) \leq n-1.\]
\end{lemma}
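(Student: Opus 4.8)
The plan is to mimic the proof of Lemma~\ref{thm:dpartial}, decomposing $\Hi(s)$ and $\Hi(t)$ as sums of indicator functions, but to cancel one pair of indicators using the shared symbol before invoking subadditivity. Concretely, let $a \in \Sigma$ be a symbol occurring in both $s$ and $t$, say $\at{s}{p} = \at{t}{q} = a$ for some indices $p, q$. Writing $\Hi(s) = \sum_{i=0}^{n-1} e_{\at{s}{i}}$ and $\Hi(t) = \sum_{i=0}^{n-1} e_{\at{t}{i}}$, I would pair the term $e_{\at{s}{p}}$ with $e_{\at{t}{q}}$; since both equal $e_a$, their contribution $\Pa(e_a - e_a) = 0$ vanishes. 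The remaining difference $\Hi(s) - \Hi(t)$ is then a sum of $n-1$ terms of the form $e_{\at{s}{i}} - e_{\at{t}{\sigma(i)}}$ (after fixing any bijection $\sigma$ between the leftover indices of $s$ and $t$), so
\[
\Pa(\Hi(s) - \Hi(t)) = \sum_{i \neq p} \Pa\!\left(e_{\at{s}{i}} - e_{\at{t}{\sigma(i)}}\right),
\]
and each summand has difference at most $1$ exactly as in the previous lemma (its values lie in $\{0,1\}$ or in $\{0,-1\}$). Subadditivity of $\Di$ then gives the bound $n-1$.

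The only subtlety is making sure the re-pairing is legitimate: the multiset $\{e_{\at{s}{i}} : i\}$ minus the single element $e_a$ has exactly $n-1$ elements, and likewise for $t$, so there is indeed a bijection $\sigma$ between the index set $\{0,\dots,n-1\}\setminus\{p\}$ and $\{0,\dots,n-1\}\setminus\{q\}$, and reordering the terms of a finite sum does not change its value. I do not expect any real obstacle here; the main (and only) point is the observation that a common symbol lets us drop one indicator from each histogram, reducing the count of summands from $n$ to $n-1$ before applying the same subadditivity argument. I would keep the write-up short, essentially: "Pick $a$ occurring in both $s$ and $t$; then $\Hi(s) - \Hi(t) = (\Hi(s) - e_a) - (\Hi(t) - e_a)$, and $\Hi(s) - e_a$, $\Hi(t) - e_a$ are histograms of strings of length $n-1$, so apply Lemma~\ref{thm:dpartial}."

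Indeed, that last phrasing is the cleanest: since $a$ occurs in $s$, the function $\Hi(s) - e_a$ is a nonnegative histogram realized by the length-$(n-1)$ string obtained by deleting one occurrence of $a$ from $s$, and similarly for $t$; moreover $\Pa(\Hi(s) - \Hi(t)) = \Pa\big((\Hi(s)-e_a) - (\Hi(t)-e_a)\big)$, so Lemma~\ref{thm:dpartial} applied to these two length-$(n-1)$ strings yields $\Di(\Pa(\Hi(s)-\Hi(t))) \le n-1$ directly. This avoids even mentioning bijections. So in the final proof I would first state the reduction to strings of length $n-1$, then cite Lemma~\ref{thm:dpartial}, and stop.
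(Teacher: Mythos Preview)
Your argument is correct and is essentially the same as the paper's: both cancel one matched pair of indicator functions coming from the shared symbol and then bound the remaining $n-1$ summands by subadditivity exactly as in Lemma~\ref{thm:dpartial}. The paper just phrases the cancellation as ``without loss of generality $\at{s}{0}=\at{t}{0}$'' (since histograms are order-independent), which is the same reduction you describe.
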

\begin{proof}

In the proof of Lemma~\ref{thm:dpartial} we obtained that when $s,t\in \Sigma^n$, 
    \[\Pa(\Hi(s) - \Hi(t)) = \sum_{i=0}^{n-1} \Pa(e_{\at{s}{i}} - e_{\at{t}{i}}).\]
    Since the histograms do not depend on the order of the symbols, we can assume without loss of generality that $\at{s}{0} = \at{t}{0}$, and therefore the first summand of the sum vanishes, so the remaining $n-1$ can contribute at most $1$ difference each.
\end{proof}

Now we consider difference of partial sums  but including depths.
\begin{lemma}
    If $s, t \in \Sigma^n$ are strings and $d : \Sigma^n \to \Sigma$ is a depth assignment, then
    \[\Di\left(\Pa\left(\Hi(s) + e_{d_{s}}\right) - \Pa\left(\Hi(t) + e_{d_{t}}\right)\right) \leq n+1.\]
\end{lemma}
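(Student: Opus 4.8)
The plan is to reduce this to Lemma~\ref{thm:dpartial} by absorbing the depth terms into the histograms. The key observation is that $\Hi(s) + e_{d_s}$ is exactly the histogram $\Hi(\hat s)$ of the length-$(n+1)$ string $\hat s := \at{s}{0}\at{s}{1}\cdots\at{s}{n-1}\,d_s$ obtained by appending the depth symbol $d_s$ at the end of $s$; indeed $\Hi(\hat s)(c) = |s|_c + e_{d_s}(c) = (\Hi(s)+e_{d_s})(c)$ for every $c\in\Sigma$. Likewise $\Hi(t) + e_{d_t} = \Hi(\hat t)$ with $\hat t := t\,d_t$. So the quantity to bound is $\Di(\Pa(\Hi(\hat s) - \Hi(\hat t)))$ for two strings $\hat s,\hat t$ of length $n+1$.

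From here I would simply repeat the argument of Lemma~\ref{thm:dpartial} with $n$ replaced by $n+1$: write $\Hi(\hat s) = \sum_{i=0}^{n} e_{\at{\hat s}{i}}$ and $\Hi(\hat t) = \sum_{i=0}^{n} e_{\at{\hat t}{i}}$, use linearity of $\Pa$ to get
\[
\Pa(\Hi(\hat s) - \Hi(\hat t)) = \sum_{i=0}^{n} \Pa\big(e_{\at{\hat s}{i}} - e_{\at{\hat t}{i}}\big),
\]
observe that each summand $\Pa(e_a - e_b)$ takes values only in $\{0,1\}$ or only in $\{0,-1\}$ and hence has difference at most $1$, and conclude by subadditivity of $\Di$ that the total difference is at most $n+1$.

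There is essentially no obstacle here; the one point to be careful about is that Lemma~\ref{thm:dpartial} is stated for strings of length exactly $n$, so it cannot be cited verbatim as a black box — either one notes that its proof is length-agnostic and applies to length $n+1$, or one inlines the short computation above. One could additionally remark that the bound improves to $n$ whenever $d_s=d_t$, or more generally whenever $\hat s$ and $\hat t$ share a symbol, by invoking Lemma~\ref{thm:dpartial2} in place of Lemma~\ref{thm:dpartial}; but over a large alphabet the appended depth symbols and the strings can have pairwise disjoint support, so $n+1$ is the best uniform bound obtainable this way, and it is exactly what the discrepancy argument built on Lemma~\ref{thm:bvalid} requires.
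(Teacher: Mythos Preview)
Your proof is correct and essentially the same as the paper's: both arguments use linearity of $\Pa$, subadditivity of $\Di$, and the fact that each term $\Pa(e_a-e_b)$ has difference at most~$1$, summing to the bound $n+1$. The only difference is packaging --- the paper splits off the depth contribution and cites Lemma~\ref{thm:dpartial} for the remaining $n$ terms, whereas you absorb $e_{d_s}$ and $e_{d_t}$ into length-$(n+1)$ strings $\hat s,\hat t$ and rerun the Lemma~\ref{thm:dpartial} computation at that length.
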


\begin{proof}
    The formula follows from Lemma~\ref{thm:dpartial} using subadditivity of $D$, linearity of $P$ and the fact that $\Di(\Pa(e_{d_s} - e_{d_t})) \leq 1$.
\end{proof}

\begin{lemma}
    If $s, t \in \Sigma^n$ are strings and $d : \Sigma^n \to \Sigma$ is a depth assignment, such that
    \begin{itemize}
    \item $d_s = d_t$, or
    \item $s$ and $t$ share at least one symbol
    \end{itemize}
    Then
    \[\Di\left(\Pa\left(\Hi(s) + e_{d_{s}}\right) - \Pa\left(\Hi(t) + e_{d_{t}}\right)\right) \leq n.\]
\end{lemma}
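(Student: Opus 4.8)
The plan is to split the argument according to which of the two hypotheses holds, since the two cases call for genuinely different estimates. Neither case is hard once the right earlier lemma is invoked; the only point of care is matching the hypothesis to the bound.

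In the first case, $d_s = d_t$, so $e_{d_s} = e_{d_t}$ and by linearity of $\Pa$ the depth terms cancel:
\[
\Pa(\Hi(s) + e_{d_s}) - \Pa(\Hi(t) + e_{d_t}) = \Pa(\Hi(s) - \Hi(t)).
\]
The claimed bound $\le n$ is then exactly Lemma~\ref{thm:dpartial}, and no shared-symbol assumption is needed here.

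In the second case, $s$ and $t$ share at least one symbol. Here I would use subadditivity of $\Di$ together with linearity of $\Pa$ to split off the depth correction:
\[
\Di\big(\Pa(\Hi(s)+e_{d_s}) - \Pa(\Hi(t)+e_{d_t})\big) \le \Di\big(\Pa(\Hi(s)-\Hi(t))\big) + \Di\big(\Pa(e_{d_s}-e_{d_t})\big).
\]
The first term is at most $n-1$ by Lemma~\ref{thm:dpartial2}, which is exactly where the shared-symbol hypothesis is spent. The second term is at most $1$ because $\Pa(e_a-e_b)$ takes values in $\{0,1\}$ or in $\{0,-1\}$ (concretely $\Pa(e_a)(i)=\mathbb{1}[i\ge a]$, so $\Pa(e_a-e_b)$ is a difference of two such step functions) — this is the same observation already used in the proof of the preceding lemma. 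Adding the two bounds gives $n$.

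The proof has no real obstacle; the only thing worth flagging is that the two hypotheses are not interchangeable in the argument. In the equal-depth case we keep the full strength $\le n$ of Lemma~\ref{thm:dpartial} and the depth terms vanish for free; in the shared-symbol case we must instead trade one unit of the histogram bound (dropping from $n$ to $n-1$ via Lemma~\ref{thm:dpartial2}) for the unit contributed by the depth correction $\Pa(e_{d_s}-e_{d_t})$. Using the cruder bound $\Di(\Pa(\Hi(s)-\Hi(t)))\le n$ in the second case would only yield $n+1$, i.e.\ the bound of the previous lemma that we are trying to improve.
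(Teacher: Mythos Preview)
Your proof is correct and follows essentially the same approach as the paper: split into the two cases, in the equal-depth case cancel the depth terms and invoke Lemma~\ref{thm:dpartial}, and in the shared-symbol case combine the $n-1$ bound of Lemma~\ref{thm:dpartial2} with the unit contribution from $\Pa(e_{d_s}-e_{d_t})$ via subadditivity. Your write-up is slightly more explicit about the subadditivity step and about why $\Di(\Pa(e_{d_s}-e_{d_t}))\le 1$, but the argument is the same.
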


\begin{proof}
In the case where $d_s = d_t$, the term $\Pa(e_{d_s}) - \Pa(e_{d_t})$ vanishes and contributes $0$ to the total difference, so the difference-bound of $n$ for the non-depth terms applies.

In the case where $s$ and $t$ share a symbol, the $n-1$ bound on the non-depth terms applies, with at most an extra unit of difference from the $\Pa(e_{d_s}) - \Pa(e_{d_t})$ term.
\end{proof}

The  results above lead us to the following two theorems.

\begin{theorem}
    \label{thm:bound1}
    If there exists a depth assignment $d : \Sigma^n \to \Sigma$ such that the valid subgraph of $B_n$  has a Hamiltonian cycle, then 
    such a  
    cycle corresponds to a de Bruijn sequence of order~$n$ with discrepancy at most~$n+1$.
\end{theorem}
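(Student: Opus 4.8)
The plan is to assemble the lemmas already proved; the theorem is essentially their corollary. Fix a depth assignment $d$ for which the valid subgraph of $B_n$ admits a Hamiltonian cycle $s_0, s_1, \dots, s_{N-1}$, where $N = |\Sigma|^n$ and indices are henceforth read modulo $N$, so that $(s_i, s_{i+1})$ is a valid arc for every $i$. By the de Bruijn graph correspondence, $w = \at{s_0}{0}\,\at{s_1}{0}\cdots\at{s_{N-1}}{0}$ is a de Bruijn sequence of order $n$. To bound $\disc(w)$ it suffices, by definition, to bound $\Di(u)$ for every cyclic substring $u$ of $w$.

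Every nonempty cyclic substring $u$ of $w$ has the form $u = \at{s_i}{0}\,\at{s_{i+1}}{0}\cdots\at{s_{j}}{0}$ for a contiguous block of indices, so that $\Hi(u) = \sum_{\ell=i}^{j} e_{\at{s_\ell}{0}}$. First I would apply Lemma~\ref{thm:bvalid} to the path $s_i, s_{i+1}, \dots, s_j, s_{j+1}$ --- note the extra past-the-end node $s_{j+1}$, which exists because the cycle is Hamiltonian and indices wrap --- to obtain
\[
\Hi(u) \;\equiv\; \Pa\bigl(\Hi(s_i) + e_{d_{s_i}}\bigr) - \Pa\bigl(\Hi(s_{j+1}) + e_{d_{s_{j+1}}}\bigr) \mod K .
\]
Since adding a multiple of $K$ to a histogram leaves its difference unchanged, $\Di(u) = \Di(\Hi(u))$ equals the difference of the right-hand side. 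I would then invoke the difference bound $\Di\bigl(\Pa(\Hi(s)+e_{d_s}) - \Pa(\Hi(t)+e_{d_t})\bigr) \le n+1$, which holds for all $s,t\in\Sigma^n$ and every depth assignment and which in turn rests on Lemma~\ref{thm:dpartial}, to conclude $\Di(u) \le n+1$. As $u$ ranges over all cyclic substrings of $w$ (the empty one contributing difference $0$), this yields $\disc(w) \le n+1$, proving the theorem.

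There is no genuine obstacle here; the only points needing care are bookkeeping ones. One must check that every cyclic substring of $w$ --- including the degenerate full-length one, where $s_{j+1}$ wraps back to $s_i$ and the right-hand side vanishes --- is captured by the path-with-past-the-end-element construction, and that the congruence modulo $K$ really does let us pass freely between $\Hi(u)$ and the telescoped expression when computing the difference. All of the substantive content has already been established; this theorem simply packages Lemma~\ref{thm:bvalid} together with the depth-augmented difference bound.
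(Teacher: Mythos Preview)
Your proposal is correct and follows essentially the same approach as the paper: apply Lemma~\ref{thm:bvalid} to each substring with an extra past-the-end node, then invoke the $n+1$ bound on $\Di\bigl(\Pa(\Hi(s)+e_{d_s}) - \Pa(\Hi(t)+e_{d_t})\bigr)$. The paper treats this theorem as an immediate consequence of those two lemmas, and your write-up simply spells out that deduction in slightly more detail.
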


\begin{theorem}
    \label{thm:bound2}
    If there exists a depth assignment $d : \Sigma^n \to \Sigma$ such that
    \begin{itemize}
        \item every pair of strings with different depth share a symbol, and
        \item the valid subgraph of $B_n$  has a Hamiltonian cycle,
    \end{itemize}
    then that cycle corresponds to a de Bruijn sequence of order $n$ with discrepancy at most $n$.
\end{theorem}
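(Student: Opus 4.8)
The plan is to reduce Theorem~\ref{thm:bound2} to one uniform bound on differences of partial sums of histograms, and then feed that bound into the telescoping identity of Lemma~\ref{thm:bvalid}, following exactly the recipe described in the paragraph after that lemma. I would start by fixing a depth assignment $d : \Sigma^n \to \Sigma$ satisfying the two hypotheses, and letting $s_0, s_1, \dots, s_N$ (with $s_N = s_0$) denote a Hamiltonian cycle of the valid subgraph of $B_n$. Since the valid subgraph is a subgraph of $B_n$, this is a Hamiltonian cycle of $B_n$, so $w = \at{s_0}{0}\,\at{s_1}{0}\cdots\at{s_{N-1}}{0}$ is a de Bruijn sequence of order $n$, and it remains only to bound $\disc(w)$ by $n$.

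The first step is to establish the key inequality: for every pair $s, t \in \Sigma^n$,
\[
\Di\left(\Pa\left(\Hi(s) + e_{d_s}\right) - \Pa\left(\Hi(t) + e_{d_t}\right)\right) \le n .
\]
This I would obtain from the already-proved lemma that gives precisely this bound whenever $d_s = d_t$ or $s$ and $t$ share a symbol, via an exhaustive case split: if $d_s = d_t$ the lemma applies directly, and if $d_s \neq d_t$ then the first hypothesis of Theorem~\ref{thm:bound2} forces $s$ and $t$ to share a symbol, so the lemma applies again. As these two cases cover all pairs, the displayed inequality holds unconditionally.

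The second step transfers this to every substring of $w$. An arbitrary cyclic substring is $u = \at{s_i}{0}\,\at{s_{i+1}}{0}\cdots\at{s_j}{0}$ for a contiguous stretch of the cycle (indices modulo $N$); appending the next vertex gives a path $s_i, s_{i+1}, \dots, s_j, s_{j+1}$ in the valid subgraph, and Lemma~\ref{thm:bvalid} applied to this path yields
\[
\Hi(u) \equiv \Pa\left(\Hi(s_i) + e_{d_{s_i}}\right) - \Pa\left(\Hi(s_{j+1}) + e_{d_{s_{j+1}}}\right) \mod K .
\]
Since the difference of a histogram is unaffected by adding multiples of $K$, we have $\Di(u) = \Di(\Hi(u))$ equal to the difference of the right-hand side, which is at most $n$ by the key inequality. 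Maximising over all substrings gives $\disc(w) \le n$, which is the assertion of the theorem (and is best possible, since the vertex $0^n$ of $B_n$ occurs as a window, forcing discrepancy at least $n$).

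I do not anticipate a genuine obstacle: the quantitative content is already packaged in Lemmas~\ref{thm:dpartial} and~\ref{thm:dpartial2}, in their depth-aware refinement, and in the telescoping Lemma~\ref{thm:bvalid}. The only things needing attention are bookkeeping: confirming that the cases $d_s = d_t$ and $d_s \neq d_t$ are exhaustive under the theorem's first hypothesis, and noting that the ``past-the-end'' vertex $s_{j+1}$ always exists because we traverse a cycle, so the extension trick from the discussion after Lemma~\ref{thm:bvalid} is legitimate here --- essentially the same argument that makes Theorem~\ref{thm:bound1} work, now sharpened by one unit using the shared-symbol hypothesis.
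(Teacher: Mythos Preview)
Your proposal is correct and matches the paper's intended argument: the paper simply states that ``the results above lead us to'' Theorem~\ref{thm:bound2}, meaning exactly the combination you spell out --- the case split on $d_s = d_t$ versus $d_s \neq d_t$ invoking the shared-symbol hypothesis to trigger the $n$-bound lemma, followed by Lemma~\ref{thm:bvalid} with the past-the-end vertex as described in the paragraph after that lemma. Your write-up is in fact more explicit than what the paper provides.
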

For $|\Sigma| = 2$, the first condition in
Theorem~\ref{thm:bound2} is satisfied if and only if the string with all zeros and the string with all ones are assigned the same depth.

\section{Our Hamiltonian Cycle}

Since $\icr_1 : \Sigma^n \to \Sigma^n$ is a bijective function, 
it partitions the space $\Sigma^n$ into cycles. Let $\mathbf{N}$ be the set of cycles given by the $\icr_1$ rule. We use this set to build a de Bruijn sequence by joining these cycles to obtain a Hamiltonian cycle.

\begin{definition}[Orbit of a node]
     We define $\orbit: \Sigma^n \rightarrow \mathcal{P}(\Sigma^n)$ as
    \[\orbit(s) = \{s, \icr(s), \icr^2(s), \dots \}.
    \]
\end{definition}
 
Alternatively, it can be defined as the equivalence class of the relation defined by: 
\[
 R(s,t) \iff t = \icr^k(s)  \text{ for some } k\in \mathbb{Z} .
\]

\begin{definition}[Cycle Graph]
     We define a directed graph $\mathbf{G}$ with node-set $\mathbf{N}$ and an arc $(\mathcal{U}, \mathcal{V})$ between cycles $\mathcal{U}, \mathcal{V} \in \mathbf{N}$ if, and only if, there exists an $s \in \mathcal{U}$ such that $\icr_0(s) \in \mathcal{V}$.
\end{definition}

Let us assume that there exists a directed tree $\mathbf{T}$ rooted at $\mathcal{R} \in \mathbf{N}$ that spans $\mathbf{G}$ (the tree is directed from the root to the leaves). The next section proves the existence of $\mathbf{T}$. The tree $\mathbf{T}$ defines a parent relationship $\parent : \mathbf{N} \setminus \{\mathcal{R}\} \to \mathbf{N}$.

\begin{definition}[Representative]
    Let $\mathcal{U} \in \mathbf{N} \setminus \{\mathcal{R}\}$. The representative $s$ of $\mathcal{U}$ is any (fixed) node $s \in \mathcal{U}$ such that $\icr_0^{-1}(s) \in \parent(\mathcal{U})$ and $\at{s}{n-1} \equiv d \mod |\Sigma|$, where $d$ is the depth of $\mathcal{U}$ in the tree.
    We write 
    $\rep(\mathcal{U}) = s$,
    and also
    $\rep(u) = s$
    for any $u \in \mathcal{U}$. The set of representatives of   all cycles is denoted with $\reps$.
\end{definition}

Observe  that for all $s \in \reps$, $\icr_0^{-1}(s) \in \parent(\orbit(s))$.
Let us define the following rule:
\[
    \nxt(s) = \begin{cases}
        \icr_0(s) & \text{ if } \icr_0(s) \in \reps\\
        \icr_2(s) & \text{ if } \icr_1(s) \in \reps\\
        \icr_1(s) & \text{ otherwise.}
    \end{cases}
\]

\begin{lemma}\label{lemma:Rfunction}
$\nxt$ is a function.
\end{lemma}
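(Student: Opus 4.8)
The plan is to show that the three cases in the definition of $\nxt$ are exhaustive and mutually exclusive, so that exactly one branch applies to each $s \in \Sigma^n$ and $\nxt(s)$ is well-defined. Since the three conditions are tested in order (``$\icr_0(s) \in \reps$'', then ``$\icr_1(s) \in \reps$'', then ``otherwise''), exhaustiveness is automatic — the third case is a catch-all. The real content is that the first two cases do not \emph{conflict} in a way that makes the rule ambiguous: here there is nothing to check between case 1 and case 2 since they are guarded sequentially, but we \emph{do} need that whenever case 2 or case 3 fires, the output is still a legitimate node, and more importantly that the rule is consistent when viewed as a partial map we intend to iterate. So the substantive claim is really: for every $s$, the prescribed value lies in $\Sigma^n$ (trivial, as $\icr_j$ maps $\Sigma^n \to \Sigma^n$) and the case split assigns $s$ to a unique branch.

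First I would observe that the only way $\nxt$ could fail to be a function is if the guards for the first two cases could hold \emph{simultaneously} in a problematic manner, or — reading the definition as the authors likely intend — if within a single case the value were not uniquely determined; neither happens. Then I would verify the genuinely non-vacuous point hidden in case~2: if $\icr_1(s) \in \reps$, we output $\icr_2(s)$, and we must make sure this is sensible, i.e.\ that we are not simultaneously in case~1 (which would be the case $\icr_0(s) \in \reps$). The key sub-step is: it cannot happen that both $\icr_0(s) \in \reps$ and $\icr_1(s) \in \reps$ — because $\icr_0(s)$ and $\icr_1(s)$ lie in the \emph{same} orbit (they differ only in the last symbol, $\icr_1(s)$ being $\icr_0(s)$ with its last symbol incremented, hence $\icr_1(s) = \icr_1(\icr_0^{-1}(\icr_0(s)))$ applied appropriately; more directly both are of the form $\icr_j(s)$ and $\orbit(\icr_0(s)) = \orbit(\icr_1(s)) = \orbit(\icr(s))$), and each orbit has \emph{exactly one} representative. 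So at most one of $\icr_0(s), \icr_1(s)$ can be the representative of that common orbit, which is exactly what prevents cases~1 and~2 from clashing.

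Next I would package this: the three cases partition $\Sigma^n$ into the set where $\icr_0(s) \in \reps$, the set where $\icr_0(s) \notin \reps$ but $\icr_1(s) \in \reps$, and the complement; these are pairwise disjoint and cover $\Sigma^n$ by construction, and on each the value of $\nxt(s)$ is a single explicitly named element of $\Sigma^n$. Hence $\nxt$ is a well-defined function $\Sigma^n \to \Sigma^n$. I expect the main (and only) obstacle to be stating precisely why $\icr_0(s)$ and $\icr_1(s)$ belong to the same orbit and invoking uniqueness of the representative within an orbit — everything else is bookkeeping about a sequentially-guarded case definition. One should also double-check the degenerate interaction with the root $\mathcal{R}$, which has no representative under the given definition, but this only makes the relevant sets smaller and does not affect disjointness or the functional property.
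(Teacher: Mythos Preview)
Your central step is wrong. You claim that $\icr_0(s)$ and $\icr_1(s)$ lie in the same $\icr$-orbit and then invoke uniqueness of the representative in an orbit. But $\icr_0(s)$ and $\icr_1(s)$ are \emph{not} in the same orbit in general: for $|\Sigma|=3$, $n=2$, $s=10$ one has $\icr_0(s)=01$ with orbit $\{00,01,11,12,22,20\}$ and $\icr_1(s)=02$ with orbit $\{02,21,10\}$. Worse, under the very hypothesis $\icr_0(s)\in\reps$ they are \emph{forced} to be in different orbits: the definition of representative gives $\icr_0^{-1}(\icr_0(s))=s\in\parent(\orbit(\icr_0(s)))$, and since $\orbit(s)=\orbit(\icr_1(s))$, the orbit of $\icr_1(s)$ is the parent of the orbit of $\icr_0(s)$. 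So the ``same orbit, one representative'' argument cannot work here.

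The paper's proof instead exploits the depth constraint built into the definition of $\reps$. If both $\icr_0(s),\icr_1(s)\in\reps$, then as above $\orbit(\icr_1(s))$ is the parent of $\orbit(\icr_0(s))$, so their depths satisfy $d_{\orbit(\icr_1(s))}=d_{\orbit(\icr_0(s))}-1$. On the other hand, the last symbols are $\at{\icr_0(s)}{n-1}=\at{s}{0}$ and $\at{\icr_1(s)}{n-1}=\at{s}{0}+1$, and each must equal the depth of its orbit modulo $|\Sigma|$. Combining gives $\at{s}{0}-1\equiv \at{s}{0}+1\pmod{|\Sigma|}$, impossible for $|\Sigma|>2$. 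For $|\Sigma|=2$ there is nothing to prove since $\icr_0=\icr_2$ and the two branches output the same value. Note also that your fallback observation --- that a sequentially guarded case split is automatically a function --- is literally true but beside the point: the disjointness of the first two guards is exactly what is reused later in the bijectivity proof.
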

\begin{proof}
We have to show  that  there is no
$s \in \Sigma^n$ such that $\icr_0(s), \icr_1(s) \in \reps$. We only need to prove the case $|\Sigma|>2$, 
because for $|\Sigma| = 2$, $\icr_0$ and $\icr_2$ coincide.

Suppose $\icr_0(s), \icr_1(s) \in \reps$. Then, by definition, $\orbit(s) = \orbit(\icr_1(s))$. And, since $\icr_0(s) \in \reps$, necessarily $\orbit(s)$ is the parent of $\orbit(\icr_0(s))$. This implies that the depth of $\orbit(\icr_0(s))$ is one more than the depth of $\orbit(\icr_1(s))$.
On the other hand, the last symbol of $\icr_1(s)$ is one more than the last symbol of $\icr_0(s)$, and by definition they match the depths of their respective orbits modulo $|\Sigma|$. Since $|\Sigma| > 2$, this is a contradiction.
\end{proof}

Let us prove the following lemmas about this rule.
\begin{lemma}[Bijectivity]
The function $\nxt$ is bijective.
\end{lemma}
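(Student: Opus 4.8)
The plan is to show that $\nxt$ is bijective by exhibiting an explicit inverse, built from the inverse branches $\icr_0^{-1}$, $\icr_1^{-1}$, $\icr_2^{-1}$, and arguing that exactly one of these branches applies to each $t \in \Sigma^n$. Since $\icr_k$ is a bijection for every $k$, each branch of $\nxt$ is injective on its domain; the content is that the three domains partition $\Sigma^n$ and the three images partition $\Sigma^n$ as well.

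First I would look at the image side. A string $t$ can be written as $\nxt(s)$ in three mutually exclusive ways according to which branch produced it: either $t = \icr_0(s)$ with $\icr_0(s)\in\reps$, i.e. $t\in\reps$ and $s=\icr_0^{-1}(t)$; or $t=\icr_2(s)$ with $\icr_1(s)\in\reps$, i.e. $\icr_1(\icr_2^{-1}(t))\in\reps$ and $s=\icr_2^{-1}(t)$; or $t=\icr_1(s)$ with $\icr_0(s)\notin\reps$ and $\icr_1(s)\notin\reps$, i.e. $t\notin\reps$, $\icr_0(\icr_1^{-1}(t))\notin\reps$, and $s=\icr_1^{-1}(t)$. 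So I would define
\[
\nxt^{-1}(t) = \begin{cases}
\icr_0^{-1}(t) & \text{if } t \in \reps,\\
\icr_2^{-1}(t) & \text{if } t\notin\reps \text{ and } \icr_1(\icr_2^{-1}(t)) \in \reps,\\
\icr_1^{-1}(t) & \text{otherwise,}
\end{cases}
\]
and check that this is well defined, i.e. that the second and third cases are not simultaneously satisfiable — but they are manifestly exclusive by the "$\icr_1(\icr_2^{-1}(t))\in\reps$ or not" dichotomy, so $\nxt^{-1}$ is a genuine function. Then I would verify $\nxt^{-1}(\nxt(s)) = s$ by a case split on which branch of $\nxt$ fires at $s$, using Lemma~\ref{lemma:Rfunction} to know the branches of $\nxt$ are exhaustive and exclusive, and verify $\nxt(\nxt^{-1}(t)) = t$ similarly by a case split on which branch of $\nxt^{-1}$ fires at $t$. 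Each verification is a short chain of equalities of the form $\icr_k^{-1}\circ\icr_k = \mathrm{id}$ together with re-reading the side conditions.

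The one subtle point — and the main obstacle — is the middle branch: I must check that when $t\notin\reps$ and $\icr_1(\icr_2^{-1}(t))\in\reps$, setting $s=\icr_2^{-1}(t)$ really does give $\nxt(s)=t$ via the second branch of $\nxt$, which requires $\icr_1(s)\in\reps$ (true, that is exactly the side condition rewritten) — and conversely that this $s$ does not also trigger the first branch, i.e. $\icr_0(s)\notin\reps$. If $|\Sigma|=2$ then $\icr_0=\icr_2$, so $\icr_0(s)=\icr_2(s)=t\notin\reps$ and we are fine; if $|\Sigma|>2$, then $\icr_0(s)\in\reps$ and $\icr_1(s)\in\reps$ together would contradict Lemma~\ref{lemma:Rfunction}, so again $\icr_0(s)\notin\reps$. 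Symmetrically, in the third branch I must confirm $\icr_1^{-1}(t)$ does not trigger the first or second branch of $\nxt$, which is exactly the pair of conditions "$t\notin\reps$" and "$\icr_0(\icr_1^{-1}(t))\notin\reps$" that define the "otherwise" case of $\nxt$ — note $\icr_0(\icr_1^{-1}(t))\notin\reps$ is what rules out the first branch of $\nxt$ at $s=\icr_1^{-1}(t)$, while $\icr_1(s)=t\notin\reps$ rules out the second. Once these compatibility checks are in place, $\nxt\circ\nxt^{-1}=\mathrm{id}$ and $\nxt^{-1}\circ\nxt=\mathrm{id}$ follow, and $\nxt$ is bijective. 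I would also remark that, $\Sigma^n$ being finite, injectivity of $\nxt$ alone already gives bijectivity, so one could shorten the argument to: $\nxt$ is injective because on each of the three (by Lemma~\ref{lemma:Rfunction}) disjoint parts of its domain it agrees with an injection, and the three images are pairwise disjoint (the first image is inside $\reps$, the second and third are outside $\reps$, and the second and third are separated by the $\icr_1(\cdot)\in\reps$ test applied to the preimage under the relevant $\icr_k$). I expect the pigeonhole shortcut to be the cleanest route.
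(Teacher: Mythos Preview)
Your approach is sound and, at bottom, the same as the paper's: both prove injectivity on the finite set $\Sigma^n$ by showing that the three branches of $\nxt$ cannot collide. The paper organises this as a case analysis on pairs of branches assuming $\nxt(s)=\nxt(t)$; you organise it as ``the three images are pairwise disjoint'' (or, equivalently, by building an explicit inverse). The key lemma invoked in either organisation is Lemma~\ref{lemma:Rfunction}.

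There is, however, one genuine step you assert without proof: that the image of the second branch lies outside $\reps$, i.e.\ that $\icr_1(s)\in\reps$ and $\icr_0(s)\notin\reps$ force $\icr_2(s)\notin\reps$. This is \emph{not} a matter of ``re-reading the side conditions''---the side conditions say nothing about $\icr_2(s)$. For $|\Sigma|=2$ it is immediate since $\icr_2=\icr_0$, but for $|\Sigma|>2$ you need the same shift trick the paper uses in its $\icr_0(s)=\icr_2(t)$ case: set $v=\icr_0^{-1}(\icr_1(s))$, so that $\icr_0(v)=\icr_1(s)$ and $\icr_1(v)=\icr_2(s)$; if both were in $\reps$ you would contradict Lemma~\ref{lemma:Rfunction}. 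Without this step, your verification of $\nxt^{-1}(\nxt(s))=s$ on the middle branch breaks (the ``first'' case of $\nxt^{-1}$ might fire at $t=\icr_2(s)$), and your pigeonhole claim that the second image avoids $\reps$ is unjustified. Once you insert this one-line argument, your proof is complete and equivalent to the paper's.
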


\begin{proof}
Suppose $\nxt(s) = \nxt(t)$, then necessarily the maximal proper suffixes of $s$ and $t$ coincide. It is easy to see that if the rule applied to $s$ and $t$ is the same, then $s = t$, because all $\icr_k$ are injective. Then, we have three cases without symmetries:
\begin{itemize}
    \item $\nxt(s) = \icr_0(s) = \icr_1(t) = \nxt(t)$: 
    Since $R(s) = \icr_0(s)$, then $\nxt(t) = \nxt(s) \in \reps$. But this is impossible because $R(t) = \icr_1(t)$.
    
    
     \item $\nxt(s) = \icr_0(s) = \icr_2(t) = \nxt(t)$: we can see that $\icr_1(\icr_0^{-1}(\icr_1(t))) = \icr_2(t)$ for all $t$. Then, $\icr_0^{- 1}(\icr_1(t)) = \icr_1^{-1}(\icr_0(s))$, let's call this string~$u$. Now, we know that $\icr_0(s), \icr_1(t) \in \reps$ so $\icr_0(u), \icr_1(u) \in \reps$. If $|\Sigma| > 2$ this is impossible, 
     as shown in Lemma~\ref{lemma:Rfunction}.
 
     \item $\nxt(s) = \icr_1(s) = \icr_2(t) = \nxt(t)$: this means that $\icr_1(t) \in \reps$, but this is impossible 
     because $\icr_0(s) = \icr_1(t)$ and then $\icr_0(s) \in \reps$.
\end{itemize}
\end{proof}


\begin{lemma}[Transitivity]
    The bijection $\nxt$ is transitive. That is, for any $s, t \in \Sigma^n$ there exists an integer $k$ such that $\nxt^k(s) = t$.
\end{lemma}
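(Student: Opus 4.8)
The plan is to prove the equivalent statement that $\nxt$, as a permutation of the finite set $\Sigma^n$, is a single $|\Sigma|^n$-cycle; since the $\nxt$-orbits partition $\Sigma^n$, it suffices to exhibit one node whose orbit has $|\Sigma|^n$ elements. The guiding picture is that $\nxt$ performs a depth-first traversal of the spanning tree $\mathbf T$: inside an $\icr_1$-cycle $\mathcal U$ it advances one step along the cycle via the ``otherwise'' branch $\nxt=\icr_1$, except that at the parent node $p=\icr_0^{-1}(\rep(\mathcal V))$ of a child $\mathcal V$ it detours into $\mathcal V$ via $\nxt(p)=\icr_0(p)=\rep(\mathcal V)$, and at the node $\icr_1^{-1}(\rep(\mathcal U))$ it leaves $\mathcal U$ via $\nxt=\icr_2$. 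The computational core is the identity
\[
\icr_2\bigl(\icr_1^{-1}(r)\bigr)=\icr_1\bigl(\icr_0^{-1}(r)\bigr)\qquad(r\in\Sigma^n),
\]
a one-line check from the definition of $\icr_k$: it says that leaving $\mathcal U$ at the $\icr_1$-predecessor of $r=\rep(\mathcal U)$ lands precisely on the $\icr_1$-successor of the parent node $p=\icr_0^{-1}(r)$, so the traversal of $\parent(\mathcal U)$ resumes immediately after the detour point. Before the induction I would record exactly which of the three branches of $\nxt$ fires at each relevant node, using that $\reps\cap\mathcal U=\{\rep(\mathcal U)\}$ for $\mathcal U\neq\mathcal R$ while $\reps\cap\mathcal R=\emptyset$ (representatives lie in their own cycle), that no $\icr_1$-cycle is a single node, and the key observation that $\icr_0(\icr_1^{-1}(\rep(\mathcal U)))$ is never the representative of a child of $\mathcal U$; the last point follows from the modular depth condition in the definition of $\rep$ when $|\Sigma|>2$, and from $\icr_0=\icr_2$ together with $\mathbf T$ having no cycle of length two when $|\Sigma|=2$.

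With this in place, the heart is a structural induction on $\mathbf T$ from the leaves upward. For a non-root cycle $\mathcal U$ with representative $r$ and parent node $p=\icr_0^{-1}(r)$, write $W_{\mathcal U}\subseteq\Sigma^n$ for the union of all $\icr_1$-cycles lying in the subtree of $\mathbf T$ rooted at $\mathcal U$. I claim that $r,\nxt(r),\dots,\nxt^{|W_{\mathcal U}|-1}(r)$ are the $|W_{\mathcal U}|$ elements of $W_{\mathcal U}$, pairwise distinct, and that $\nxt^{|W_{\mathcal U}|}(r)=\icr_1(p)\notin W_{\mathcal U}$. For a leaf this is the direct computation above: the ``otherwise'' branch carries us once around $\mathcal U=W_{\mathcal U}$ from $r$ to $\icr_1^{-1}(r)$, and then the $\icr_2$ branch together with the displayed identity yields $\icr_1(p)$. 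For the inductive step, let $\mathcal V_1,\dots,\mathcal V_j$ be the children of $\mathcal U$, attached at the distinct nodes $p_i=\icr_0^{-1}(\rep(\mathcal V_i))\in\mathcal U$ (distinct because $\icr_0$ is injective and the $\rep(\mathcal V_i)$ differ); traversing $\mathcal U$ from $r$, whenever we reach a node $p_i$ the first branch sends us to $\rep(\mathcal V_i)$, the induction hypothesis then traverses all of $W_{\mathcal V_i}$ and returns us to $\icr_1(p_i)$, the next node of $\mathcal U$, and the traversal of $\mathcal U$ continues; after one full loop we arrive at $\icr_1^{-1}(r)$, which by the observation above is none of the $p_i$, so the second branch genuinely fires there and we leave to $\icr_1(p)$. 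Distinctness and the count $|W_{\mathcal U}|=|\mathcal U|+\sum_i|W_{\mathcal V_i}|$ hold because the cycles in $\mathbf N$ partition $\Sigma^n$ and the subtrees $W_{\mathcal V_i}$ are pairwise disjoint and disjoint from $\mathcal U$. Finally, for the root $\mathcal R$ exactly the same loop argument applies starting from an arbitrary $s_0\in\mathcal R$ --- the $\icr_2$ branch never fires inside $\mathcal R$ since $\mathcal R$ has no representative --- and gives $\nxt^{|W_{\mathcal R}|}(s_0)=s_0$ with all $|W_{\mathcal R}|$ intermediate values distinct; as $\mathbf T$ spans $\mathbf G$ we have $W_{\mathcal R}=\Sigma^n$, so the orbit of $s_0$ has $|\Sigma|^n$ elements and $\nxt$ is transitive.

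The main obstacle is the bookkeeping in the inductive step: one has to be certain that the traversal never leaves $\mathcal U$ early and never revisits a node, and this is controlled entirely by knowing precisely which of the three branches of $\nxt$ is active at every node encountered. The only genuinely delicate point is separating the first branch from the $\icr_2$ branch at the exit node $\icr_1^{-1}(\rep(\mathcal U))$: the case $|\Sigma|=2$, where $\icr_0$ and $\icr_2$ coincide, needs the extra input that $\mathbf T$ has no two-cycle, whereas for $|\Sigma|>2$ it is forced by the depth bookkeeping built into the definition of representative.
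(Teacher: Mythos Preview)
Your argument is correct, and it is a genuinely different route from the paper's. The paper argues by contradiction and minimality: it first shows that every $\icr_1$-cycle lies inside a single $\nxt$-cycle (otherwise take the deepest offending $\mathcal U$, find $s\in\mathcal U$ with $s$ and $\icr_1(s)$ in different $\nxt$-cycles and $\icr_1(s)\notin\reps$, deduce $\icr_0(s)\in\reps$, and push the contradiction down to the child $\orbit(\icr_0(s))$), and then observes that the arc $\icr_0^{-1}(\rep(\mathcal V))\to\rep(\mathcal V)$ glues adjacent tree nodes into the same $\nxt$-cycle. You instead build the $\nxt$-orbit explicitly as a depth-first traversal of $\mathbf T$ and count its length by structural induction on subtrees. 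The key identity $\icr_2(\icr_1^{-1}(r))=\icr_1(\icr_0^{-1}(r))$ and the ``exit node'' analysis (that $\icr_0(\icr_1^{-1}(\rep(\mathcal U)))\notin\reps$, via the depth congruence for $|\Sigma|>2$ and via $\icr_0=\icr_2$ plus acyclicity of $\mathbf T$ for $|\Sigma|=2$) are exactly the ingredients the paper uses, but you deploy them constructively rather than inside a minimal-counterexample argument. Your approach is longer and requires more bookkeeping, but it yields the exact order in which $\nxt$ visits the nodes; the paper's approach is terser and avoids tracking the traversal, at the cost of giving less structural information about the resulting Hamiltonian cycle.
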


\begin{proof}
    Since $\nxt$ is a bijection, it partitions $\Sigma^n$ into disjoint $\nxt$-cycles. Suppose there is an $\icr$-cycle $\mathcal{U} \in \mathbf{N}$ such that $\mathcal{U}$ is not completely contained in any $\nxt$-cycle. That means $\mathcal{U}$ contains two nodes belonging to different $\nxt$-cycles. In particular, there are two strings $s,t \in \mathcal{U}$ such that $s$ and $\icr_1(s)$ belong to different $\nxt$-cycles and, also, $t$ and $\icr_1(t)$  belong to different $\nxt$-cycles.  
    Thus, without loss of generality, we can assume that $\icr_1(s) \notin \reps$ (since $\mathcal{U} \cap \reps$ has size at most one), and also that $\mathcal{U}$ is the deepest node in the tree that intersects two different $\nxt$-cycles.

    Clearly, $\nxt(s) \neq \icr_1(s)$, otherwise $s$ and $\icr_1(s)$ would belong to the same $\nxt$-cycle. Since $\icr_1(s) \notin \reps$, the remaining possibility (from the definition of $\nxt$) is that $\icr_0(s) \in \reps$. From the definition of $\reps$, we get that the orbit of $\icr_0(s)$ is a child of the orbit of $s$.
    From the definition of $\nxt$, we get that
    \[\nxt(\icr_1^{-1}(\icr_0(s))) = \icr_2(\icr_1^{-1}(\icr_0(s))) = \icr_1(s),\]
    and therefore $\icr_1^{-1}(\icr_0(s))$ and $\icr_1(s)$ belong to the same $\nxt$-cycle. But $\icr_0(s)$ belongs to the same 
    $\nxt$-cycle as $s$, so the orbit of $\icr_0(s)$ \emph{also} contains two different $\nxt$-cycles, which contradicts the assumption that $\mathcal{U}$ is the deepest.

    Now assume that there are two adjacent orbits $\mathcal{U}, \mathcal{V} \in \mathbf{N}$ in the tree that are contained in different $\nxt$-cycles. Without loss of generality, let us assume that $\mathcal{V}$ is the child of $\mathcal{U}$ and let $s = \rep(\mathcal{V})$.
    Since $s \in \reps$, we know that $\nxt(\icr_0^{-1}(s)) = s$ and that $\icr_0^{-1}(s)$ is in the parent orbit of $s$ (namely, $\mathcal{U}$). This is a contradiction, because 
    we assumed that $\mathcal{U}$ and $\mathcal{V}$ do not belong to the same $\nxt$-cycle.
\end{proof}

As $\nxt$ is a bijective and transitive function, and it also satisfies the property that the arc \mbox{$(s, \nxt(s))$} is in the de Bruijn graph, we can construct a Hamiltonian cycle by taking an arbitrary node as the start, let us call it $s$, and then repeatedly applying the function $\nxt$ until we arrive to $s$ again.

\subsection{Hamiltonian cycle in the Valid Subgraph}

Now that we have proved that $\nxt$ generates a Hamiltonian cycle 
in the de Bruijn graph, let us prove that 
all of its arcs are 
in the valid subgraph. To construct the valid graph we 
need to define  a depth assignment.

\begin{definition}[Depth Assignment]
    For a fixed tree of ICR-cycles, we define
    \[d : \Sigma^n \to \Sigma \quad \quad d(s) = p+1,\]
    where $p$ is the depth of the orbit of $s$ in the tree, modulo $|\Sigma|$.
\end{definition}

The resulting valid subgraph for the case where the parent of every cycle $ \mathcal{U} \in \mathbf{N} \setminus \{\mathcal{R}\}$ is $\mathcal{R}$ is given in Figure~\ref{fig:valid}, and the unique Hamiltonian path in the subgraph is shown.

\begin{lemma}
    The arc $(s, \nxt(s))$ is valid.
\end{lemma}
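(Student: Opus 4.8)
Throughout, write $t=\nxt(s)$, $b=\at{s}{0}$, $c=\at{t}{n-1}$, and let $p(x)$ be the depth in $\mathbf{T}$ of $\orbit(x)$, so that $d_x\equiv p(x)+1\pmod{|\Sigma|}$. The plan is to split on the branch of the definition of $\nxt$ that applies to $s$; in each branch $t=\icr_k(s)$ for a specific $k\in\{0,1,2\}$, and since $\icr_k$ cyclically shifts $s$ and places $\at{s}{0}+k$ in the last coordinate, $c\equiv b+k\pmod{|\Sigma|}$. I will use two reusable facts, both from the definitions already given: (a) $\icr_1$ preserves orbits, so $\orbit(\icr_1(x))=\orbit(x)$ and $d_{\icr_1(x)}=d_x$; (b) if $z\in\reps$ then $z=\rep(\orbit(z))$, hence $\at{z}{n-1}\equiv p(z)\pmod{|\Sigma|}$ and $\icr_0^{-1}(z)\in\parent(\orbit(z))$, so that $\orbit(\icr_0^{-1}(z))=\parent(\orbit(z))$ and $p(\icr_0^{-1}(z))=p(z)-1$. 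All equalities among symbols and among depth classes below are understood modulo $|\Sigma|$.

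First, the branch $t=\icr_1(s)$. By (a), $d_s=d_t$, and $c=b+1$, so the first clause of the definition of valid arc ($b+1=c$ and $d_s=d_t$) holds. Second, the branch $t=\icr_0(s)$, in which $t\in\reps$. Since $\icr_0^{-1}(t)=s$, fact (b) gives $\orbit(s)=\parent(\orbit(t))$, so $p(t)=p(s)+1$ and $d_t=d_s+1$; moreover, by (b), $c=\at{t}{n-1}\equiv p(t)\equiv d_t-1$, while also $c=b$ since $k=0$. Hence $b+1=c+1=d_t$ and $c=d_t-1=d_s$, which is the second clause of validity.

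The remaining branch, $t=\icr_2(s)$ with $u:=\icr_1(s)\in\reps$, is the one I expect to be the main obstacle, because $\icr_2$ is not one of the orbit-generating steps, so identifying $\orbit(\icr_2(s))$ within the tree is not immediate. By (a), $\orbit(u)=\orbit(s)$, so $p(u)=p(s)$; by (b), $\at{u}{n-1}\equiv p(u)=p(s)$, and since $\at{u}{n-1}=b+1$ we get $b+1=p(s)=d_s-1$. To determine $d_t$ I use the elementary identity $\icr_2(x)=\icr_1(\icr_0^{-1}(\icr_1(x)))$ (already verified in the transitivity lemma), which with $x=s$ reads $t=\icr_1(\icr_0^{-1}(u))$; then by (a) and (b), $\orbit(t)=\orbit(\icr_0^{-1}(u))=\parent(\orbit(u))=\parent(\orbit(s))$, so $p(t)=p(s)-1$ and $d_t=d_s-1$. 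Therefore $b+1=d_s-1=d_t$, and from $b=d_s-2$ we get $c=b+2=d_s$; this is again the second clause of validity. (When $|\Sigma|=2$ this branch coincides as a map with the $\icr_0$ branch, and the same computation goes through in $\mathbb{Z}/2\mathbb{Z}$.) Once the identification $\orbit(\icr_2(s))=\parent(\orbit(s))$ is in hand via the factorization $\icr_2=\icr_1\circ\icr_0^{-1}\circ\icr_1$ and the representative property of $u$, the rest is routine modular arithmetic, and the lemma follows.
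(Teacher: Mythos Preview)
Your proof is correct and follows essentially the same approach as the paper's own proof: a three-way case split on the branch of $\nxt$, using in each case the representative constraint $\at{z}{n-1}\equiv p(z)$ and the parent relation $\orbit(\icr_0^{-1}(z))=\parent(\orbit(z))$ for $z\in\reps$, and in the $\icr_2$ branch the factorization $\icr_2=\icr_1\circ\icr_0^{-1}\circ\icr_1$ to identify $\orbit(t)$ as the parent of $\orbit(s)$. The only cosmetic differences are that you handle the $\icr_1$ case first and isolate facts (a) and (b) explicitly upfront, whereas the paper invokes them inline.
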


\begin{proof}
    Let $b$ be the first symbol of $s$ and $c$ be the last symbol of $\nxt(s)$. We need to consider the three cases in the definition of $\nxt(s)$:
    \begin{itemize}
        \item Case $\icr_0(s) \in \reps$: In this case $\nxt(s) = \icr_0(s)$ and therefore $c = b$ and $c = d_{\nxt(s)}-1$. Also, by definition of $\reps$, the orbit of $\nxt(s)$ is a child of the orbit of $s$, so we have $d_s + 1 = d_{\nxt(s)}$. Putting these together we get that $b+1 = d_{\nxt(s)}$ and $c = d_s$, so the arc is valid.
        \item Case $\icr_1(s) \in \reps$: In this case $\nxt(s) = \icr_2(s)$ and therefore $c = b+2$, and 
        given that $\icr_1(s) \in \reps$, $b+1 = d_{s}-1$. Since $\icr_2(s) = \icr_1(\icr_0^{-1}(\icr_1(s)))$, the orbit of $\icr_2(s)$ is the parent of the orbit of $s$, and thus $d_{\nxt(s)} = d_s - 1$. Putting these together we get $b+1 = d_{\nxt(s)}$ and $c = d_s$, so the arc is valid.
        \item Remaining case: We have $\nxt(s) = \icr_1(s)$, so $c = b+1$ and $d_{s} = d_{\nxt(s)}$, so the arc is once again valid.\qedhere
    \end{itemize}
\end{proof}

\subsection{Discrepancy Bound for the General Case}

Due to Theorem~\ref{thm:bound1}, we have that the de Bruijn sequence associated with $\nxt$ has discrepancy at most $n+1$.

\subsection{Discrepancy Bound for the Binary Case}

For the binary case, we  use Theorem~\ref{thm:bound2}: since the strings $0^n$ and $1^n$ 
belong to the same orbit, they are assigned the same depth value and therefore the hypotheses of 
the theorem hold and we have that the de Bruijn sequence associated with $\nxt$ has discrepancy at most $n$.

\subsection{Explicit Spanning Tree}

To complete the proof, we have
to show a directed spanning tree $\mathbf{T}$  of $\mathbf{G}$ and choose the representatives for each orbit. To do this we define the following:


\begin{definition}[Difference Array]
    Given $s \in \Sigma^n$, we define its difference array $\dif(s) \in \Sigma^n$, 
    \[\at{\dif(s)}{i} = \begin{cases}
        \at{s}{i-1} - \at{s}{i} & \text{ if } 0 < i < n\\
        \at{s}{n-1} - \at{s}{0} - 1 & \text{ if } i = 0.
    \end{cases}\]
\end{definition}

The extra $-1$ in the $i=0$ case ensures the following property:

\begin{lemma}
    \label{thm:diff}
    Let $s \in \Sigma^n$, then $\dif(\icr_1(s)) = \icr_0(\dif(s))$.
\end{lemma}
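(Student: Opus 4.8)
The plan is to verify the identity coordinate by coordinate, directly from the definitions of $\icr$ and $\dif$. Set $t = \icr_1(s)$, so that $\at{t}{i} = \at{s}{i+1}$ for $0 \le i \le n-2$ and $\at{t}{n-1} = \at{s}{0} + 1$. I would then compute $\at{\dif(t)}{i}$ in the three index regimes dictated by the definition of the difference array, and separately note that $\icr_0$ is just the cyclic left rotation, so $\at{\icr_0(\dif(s))}{i} = \at{\dif(s)}{i+1}$ for $0 \le i \le n-2$ and $\at{\icr_0(\dif(s))}{n-1} = \at{\dif(s)}{0}$. The whole proof is then a term-by-term comparison.

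Carrying out the middle computation: for $1 \le i \le n-2$ we get $\at{\dif(t)}{i} = \at{t}{i-1} - \at{t}{i} = \at{s}{i} - \at{s}{i+1} = \at{\dif(s)}{i+1}$, which matches $\at{\icr_0(\dif(s))}{i}$. For $i = n-1$ we get $\at{\dif(t)}{n-1} = \at{t}{n-2} - \at{t}{n-1} = \at{s}{n-1} - (\at{s}{0} + 1) = \at{s}{n-1} - \at{s}{0} - 1 = \at{\dif(s)}{0}$, matching $\at{\icr_0(\dif(s))}{n-1}$; here the increment built into $\icr_1$ is exactly what turns this ordinary difference into the ``wrap-around'' entry of $\dif(s)$. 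Finally, for $i = 0$ we get $\at{\dif(t)}{0} = \at{t}{n-1} - \at{t}{0} - 1 = (\at{s}{0}+1) - \at{s}{1} - 1 = \at{s}{0} - \at{s}{1} = \at{\dif(s)}{1}$, matching $\at{\icr_0(\dif(s))}{0}$; here the $+1$ coming from $\icr_1$ cancels the constant $-1$ appearing in the $i=0$ case of the definition of $\dif$. Since the two arrays agree in every coordinate, $\dif(\icr_1(s)) = \icr_0(\dif(s))$.

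All arithmetic above is modulo $|\Sigma|$, but this plays no role beyond the fact that addition and subtraction are well defined, so no care is needed there. I do not expect any genuine obstacle: the content of the lemma is precisely the observation, flagged in the paper right before the statement, that the $-1$ offset in the $i=0$ case of $\dif$ was chosen so that it commutes with the increment in $\icr_1$; the proof is just the bookkeeping that confirms this, and the only thing to be slightly careful about is keeping the three index ranges and the two distinguished positions ($0$ and $n-1$) straight.
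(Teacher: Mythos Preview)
Your proof is correct: the coordinate-by-coordinate verification goes through exactly as you describe, and the three index cases cover everything (assuming $n\ge 2$; for $n=1$ both sides are the constant string $-1$ and there is nothing to check).

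Your approach, however, differs from the paper's. Rather than split into the three index regimes, the paper introduces the doubly infinite sequence $c_i = \at{\icr^i(s)}{0}$, observes the quasi-periodicity $c_{i+n} = c_i + 1$, and then passes to the first-difference sequence $d_i = c_{i-1} - c_i$, which is genuinely $n$-periodic. Once one checks that $\dif(s) = d_0 d_1 \cdots d_{n-1}$ and $\dif(\icr(s)) = d_1 d_2 \cdots d_n$, the identity is immediate from $d_n = d_0$. Your argument is more elementary and self-contained, with all the content in the bookkeeping at positions $0$ and $n-1$; the paper's argument trades that bookkeeping for a small conceptual setup (the sequence $c_i$), after which the result becomes a one-line consequence of periodicity and avoids any case analysis. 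The infinite-sequence viewpoint also makes it visibly clear why the $-1$ offset in the definition of $\dif$ is the right normalization, which is the intuition you already articulated in your final paragraph.
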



\begin{proof}
Consider the bidirectional infinite sequence
\[c_i = \at{\icr^i(s)}{0}, \quad\quad i \in \mathbb{Z}.
\]
By the definition of $\icr$, we have that $c_{i+n} = c_i + 1$. Now consider the following associated (infinite) sequence
\[d_{i} = c_{i-1} - c_i, \quad\quad i \in \mathbb{Z}.
\]
Since $c_{i+n} = c_i + 1$, we have that $d$ is cyclic modulo $n$. Also, due to the definition of $\dif$ we have that
\[\dif(s) = d_0 d_1 \dots d_{n-1} \quad\text{and}\quad \dif(\icr(s)) = d_1 d_2 \dots d_n.
\]
Since $d_n = d_0$, the desired result follows.
\end{proof}


Lemma~\ref{thm:diff} implies that for each $\mathcal{U} \in \mathbf{N}$, its elements have the same difference array modulo rotations. By convention, we say that $\dif(\mathcal{U})$ is the lexicographically minimal rotation of $\dif(s)$ for any $s \in \mathcal{U}$. It is easy to see that the converse is also true: for any $s,t \in \Sigma^n$, if $\dif(s)$ and $\dif(t)$ are equal modulo rotations, then $\orbit(s) = \orbit(t)$. Indeed, consider the string $u = \icr_1^k(s)$. We can vary $k$ such that $\dif(u) = \dif(t)$, and then the strings $u$ and $t$ only differ by an added constant, so they must belong to the same orbit.

Also, 
from the definition of $\dif$, we have the following identity
\[\sum_{i=0}^{n-1} \at{\dif(s)}{i} = -1,\]
which follows from the fact that every symbol of $s$ cancels out in the sum, except the constant factor $-1$ in the first position of $\dif(s)$. Again, the converse is also true. If we have any string $t \in \Sigma^n$ such that
    \[\sum_{i=0}^{n-1} \at{t}{i} = -1,\]
then there exists a $s \in \Sigma^n$ that satisfies $\dif(s) = t$, which is unique modulo added constants.
To see this, we can choose $\at{s}{0}$ arbitrarily, then once $\at{s}{i}$ has been determined, the value of $\at{s}{i+1}$ follows directly from the equation
    \[\at{\dif(s)}{i+1} = \at{t}{i+1}.
    \]
The only equation that 
remains to be satisfied is $\at{\dif(s)}{0} = \at{t}{0}$. However,
$\dif(s)$ and $t$ coincide in $n-1$ places and they have the same sum, $-1$, so they must be identical.



Equipped with the concept of difference array, we can define the tree $\mathbf{T}$ 
as follows.




\begin{definition}[Explicit Tree $\mathbf{T}$]
    
    The root of the tree is the orbit of $0^n$ and we call it $\mathcal{R}$.
    For each orbit $\mathcal{U} \in \mathbf{N} \setminus \mathcal{R}$, we define its parent $\mathcal{V} \in \mathbf{N}$ as follows:

    let $i$ be the first non-negative integer such that $\at{\dif(\mathcal{U})}{i} \neq 0$. Consider the array $A$ obtained from decrementing the $i$-th symbol of $\dif(\mathcal{U})$ and incrementing the $(i+1)$-th symbol. $\mathcal{V}$ is the only orbit such that $\dif(\mathcal{V})$ equals $A$ modulo rotations.
    \label{thm:extree}
\end{definition}

\begin{figure}
\centering
    \includegraphics[scale=0.06]{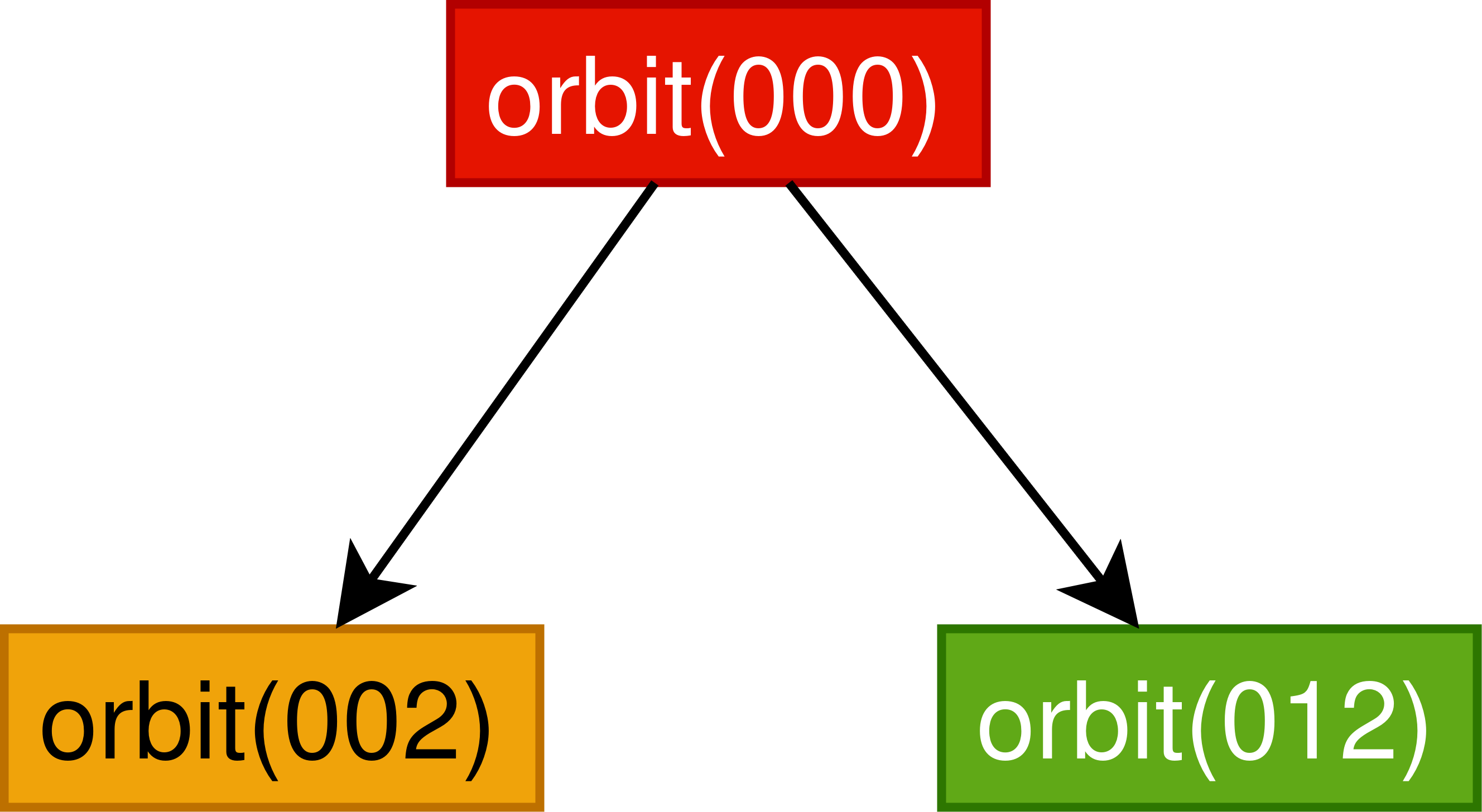}
    \caption{Explicit Tree for the case $n=3, |\Sigma| = 3$.}
    \label{fig:tree}
\end{figure}

Figure~\ref{fig:tree} gives an example of an Explicit Tree $\mathbf{T}$. 

\begin{lemma}
The Explicit Tree $\mathbf{T}$ is a spanning tree of $\mathbf{G}$.
\end{lemma}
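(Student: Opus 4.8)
The plan is to show two things: that $\mathbf{T}$ as defined is a well-defined subgraph of $\mathbf{G}$ (i.e., the ``parent'' of each non-root orbit is genuinely an orbit adjacent to it in $\mathbf{G}$), and that the parent function, iterated, always reaches the root $\mathcal{R}$ without cycling — which, since every node but the root has exactly one outgoing parent-arc, is exactly what it means to be a spanning tree rooted at $\mathcal{R}$.

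First I would check well-definedness. Given $\mathcal{U} \neq \mathcal{R}$, since $\mathcal{U}$ is not the orbit of $0^n$, its difference array $\dif(\mathcal{U})$ is not the all-zero... — wait, actually $\dif(0^n)$ has a $-1$ in position $0$ and $0$ elsewhere, so I should instead characterize $\mathcal{R}$ as the unique orbit whose difference array is $(-1,0,\dots,0)$ up to rotation, equivalently the lex-minimal rotation being $(-1,0,\dots,0)$. For any other $\mathcal{U}$, the lex-minimal rotation $\dif(\mathcal{U})$ has some entry that is positive (since the entries sum to $-1$ and it is not the single-$(-1)$ array, there must be cancellation, so some entry is $\geq 1$); more to the point, I need that the first non-negative integer $i$ with $\at{\dif(\mathcal{U})}{i} \neq 0$ exists and that decrementing position $i$ and incrementing position $i+1$ yields an array $A$ still over $\Sigma$ (i.e., symbols in $\{0,\dots,k-1\}$ read mod $k$ — here the ``mod $k$'' interpretation of $\Sigma$ makes decrement/increment always legal, so this is free) with the same sum $-1$. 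By the converse established in the excerpt (any array summing to $-1$ is $\dif(s)$ for some $s$, unique mod constants), $A$ determines a unique orbit $\mathcal{V}$, so $\parent$ is well-defined. For the edge to lie in $\mathbf{G}$, I must exhibit $s \in \mathcal{V}$ with $\icr_0(s) \in \mathcal{U}$: applying $\Delta$ and Lemma~\ref{lemma:P}-style bookkeeping, $\dif(\icr_0(s))$ differs from $\dif(s)$ exactly by decrementing some position and incrementing the next (the $\icr_0$ analog of Lemma~\ref{thm:diff}), which is precisely the $A \to \dif(\mathcal{U})$ relation; so I would spell out that one single application of $\icr_0$ to a suitable element of $\mathcal{V}$ lands in $\mathcal{U}$, using the freedom of rotation and added constant.

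Next, acyclicity / reaching the root. The natural potential function is the depth $i$ of the first non-zero entry combined with a lexicographic-type measure, but the cleanest choice is: to each orbit $\mathcal{U}$ with $\dif(\mathcal{U}) = (a_0, a_1, \dots, a_{n-1})$ (lex-minimal rotation), assign the quantity $\Phi(\mathcal{U}) = \sum_{i=0}^{n-1} i \cdot (a_i + \text{something})$ — more precisely I want a strictly monotone quantity under the parent move. The parent operation takes $a_i \mapsto a_i - 1$, $a_{i+1} \mapsto a_{i+1}+1$ where $i$ is the first index with $a_i \neq 0$ (so $a_0 = \dots = a_{i-1} = 0$ and $a_i \geq 1$). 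This strictly \emph{decreases} $\sum_j j \cdot a_j$ by exactly $1$... no wait, it moves a unit of ``mass'' from position $i$ to position $i+1$, \emph{increasing} $\sum_j j\, a_j$ by $1$; but it strictly \emph{increases} the number of leading zeros, or strictly decreases $a_i$ while the prefix stays zero. So I would argue: after the move, either position $i$ is still the first non-zero (if $a_i - 1 \geq 1$) and the value there dropped, or the first non-zero moved to position $\geq i+1$; in both cases the pair $(i, a_i)$ — first-nonzero-index, then its value — strictly increases in lex order, and it is bounded (index $< n$, value bounded because the total positive mass is bounded by... the sum of positive entries, which is fixed-ish). Hmm, the value at position $i$ can grow when mass is pushed into it from the left, but mass is only ever pushed left-to-right and there are finitely many orbits, so the cleanest finish is simply: the parent map is a well-defined function on the finite set $\mathbf{N} \setminus \{\mathcal{R}\}$ into $\mathbf{N}$, and I claim it has no cycles, which I prove via the monotone quantity $\Psi(\mathcal{U}) = \sum_{i} i \cdot a_i^{+}$ is not quite right either — I would actually prove that the process terminates by noting each step either strictly increases the index of the leftmost non-zero entry or keeps it fixed while strictly decreasing that entry, and the leftmost index is a non-decreasing, bounded integer that can only stall finitely often.

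The main obstacle I anticipate is \emph{not} termination (that is a routine monovariant once set up correctly) but rather verifying cleanly that the parent orbit is adjacent to $\mathcal{U}$ in $\mathbf{G}$ — i.e., that the single-position difference-array move corresponds to exactly one application of $\icr_0$, respecting the subtleties of working modulo rotations and modulo added constants, and the wraparound in the $i=0$ case of $\dif$ (the extra $-1$). I would handle this by proving the auxiliary identity that $\dif(\icr_0(s))$ is obtained from $\dif(s)$ by subtracting $1$ from position $j$ and adding $1$ to position $j+1$ for the appropriate $j$ determined by $s$ — the exact analog of Lemma~\ref{thm:diff} with $\icr_0$ in place of $\icr_1$ — and then checking that for $\mathcal{V}$ with $\dif(\mathcal{V}) = A$, one can choose the representative and the constant offset so that this $j$ equals $i$, so that $\icr_0$ of that element lies in $\mathcal{U}$. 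Once the $\mathbf{G}$-adjacency and the monovariant are both in hand, the conclusion that $\mathbf{T}$ is a spanning tree rooted at $\mathcal{R}$ is immediate: every non-root node has out-degree exactly one toward its parent, and following parents always terminates at $\mathcal{R}$, so the arcs form a spanning arborescence.
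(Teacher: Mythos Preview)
Your plan matches the paper's two-part structure (acyclicity via a monovariant, then $\mathbf{G}$-adjacency via an explicit witness), but your acyclicity argument has a gap. Your monovariant---the pair (leftmost-nonzero index, value there)---is verified only for the move $\dif(\mathcal{U}) \mapsto A$; you never address that $\dif(\mathcal{V})$ is not $A$ itself but the \emph{lexicographically minimal rotation} of $A$, so as written the steps do not chain from one orbit to its parent. The paper closes this in one line by taking plain lexicographic order as the monovariant: $A$ agrees with $\dif(\mathcal{U})$ on positions $0,\dots,i-1$ and is strictly smaller at position $i$, so $A < \dif(\mathcal{U})$; and $\dif(\mathcal{V}) \le A$ because it is the lex-minimal rotation of $A$. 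Hence $\dif(\mathcal{V}) < \dif(\mathcal{U})$ and there are no cycles. Your (index, value) pair is exactly the leading data of this lex comparison, so once you add the observation $\dif(\mathcal{V}) \le A$ your version also goes through. You should also note, as the paper does, that $i \le n-2$: if $i = n-1$ then $\dif(\mathcal{U}) = 0^{n-1}(k-1) = \dif(\mathcal{R})$, contradicting $\mathcal{U} \neq \mathcal{R}$.

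For $\mathbf{G}$-adjacency the paper does not prove a general ``$\icr_0$-analogue of Lemma~\ref{thm:diff}''. Instead it picks $s \in \mathcal{U}$ with $\dif(s) = \icr_0^{\,i+1}(\dif(\mathcal{U}))$---the rotation placing the first nonzero symbol at the last position---and then computes directly that $\dif(\icr_1(\icr_0^{-1}(s))) = \icr_0^{\,i+1}(A)$, whence $\icr_0^{-1}(s) \in \mathcal{V}$. This is the same idea you sketched (a single $\icr_0$-step realizes the decrement/increment on the difference array) but with a concrete choice of rotation that sidesteps the wraparound bookkeeping you flagged as the main obstacle.
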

\begin{proof}

The existence of~$i$ follows from the fact that the sum of elements of the difference array is $-1$, so it cannot consist solely of zeros. In fact, $i \leq n-2$;
otherwise in $\dif(\mathcal{U})$ there would be $n-1$ zeros and, necessarily, the remaining symbol would be $-1$, which is the difference array of $\mathcal{R}$.

To prove that the parent relationship has no cycles, note that $A$ is lexicographically smaller than $\dif(\mathcal{U})$, so we have that
\[\dif(\mathcal{V}) \leq A < \dif(\mathcal{U})
\]
and this ensures that there are no cycles.



\medskip
Let us see that for every node $\mathcal{U} \in \mathbf{N} \setminus \mathcal{R}$, the pair $(\mathcal{V}, \mathcal{U})$ is an arc of $\mathbf{G}$ when $\mathcal{V}$ is the parent of $\mathcal{U}$ as defined in the definition of the Explicit Tree.
It suffices to find an $s \in \mathcal{U}$ such that $\icr_0^{-1}(s) \in \mathcal{V}$, which can also serve as a representative of $\mathcal{U}$.
Let $i$ be the first non-negative integer such that $\at{\dif(\mathcal{U})}{i} \neq 0$, and consider the array $k = \icr_0^{i+1}(\dif(\mathcal{U}))$. That is, the array $\dif(\mathcal{U})$ rotated such that the first non-zero element moves to the last position. Let $s \in \Sigma^n$ be a string with $\dif(s) = k$. We can choose the added constant of $s$ as necessary to satisfy the last-symbol constraint in the definition of representative.

We want to show that $\icr_0^{-1}(s) \in \mathcal{V}$. To do this, let us prove that
$\dif(\icr_1(\icr_0^{-1}(s)))$ is a rotation of $A$, this will imply that $\icr_1(\icr_0^{-1}(s)) \in \mathcal{V}$ and since it is in the same orbit as $\icr_0^{-1}(s)$, it would mean that the latter is also in $\mathcal{V}$.

Observe that $\icr_1(\icr_0^{-1}(s))$ is the same as $s$ but with the last symbol increased by one. The effect that increasing the last symbol of $s$ has on $\dif(s)$ is that of decreasing the last symbol and increasing the first symbol.
Since $\dif(s) = \icr_0^{i+1}(\dif(\mathcal{U}))$, we have that $\dif(\icr_1(\icr_0^{-1}(s)))$ is the same as $\icr_0^{i+1}(\dif(\mathcal{U}))$ but the last symbol decreased and the first one increased, which is precisely the same as $\icr_0^{i+1}(A)$. This concludes the proof.
\end{proof}

\section{Our Algorithm}

\begin{algorithm}[ht]
    \DontPrintSemicolon
    \caption{CorrectDifferenceArray \label{alg:isrep}}
    
    \KwData{$s \in \Sigma^n$}
    \KwResult{\\True if $s$ has the correct difference array to be a representative, False otherwise}
    
    $d \leftarrow \dif(s)$

    \If{$\at{d}{n-1} = 0$}{
        \Return{False}
    }

    $i \leftarrow n-1$
    
    \While{$i > 0$ and $\at{d}{i-1} = 0$}{
        $i \leftarrow i-1$
    }
    
    \If{$i = 0$}{
        \Return{False}
    }
    
    $d_2 \leftarrow \icr_0^{i}(d)$

    \Return{$d_2$ is its lexicographically minimal rotation}
\end{algorithm}

\begin{algorithm}[ht]
    \DontPrintSemicolon
    \caption{Transition \label{alg:transition}}
    
    \KwData{$s \in \Sigma^n$, $\text{depth} \in \Sigma$}
    \KwResult{A Pair $(s', \text{depth}')$ that corresponds to the next node in the path}
    
    \If{$\text{CorrectDifferenceArray}(\icr_0(s))$ and $\at{\icr_0(s)}{n-1} = \text{depth}+1$}{
        \Return{$\icr_0(s), \text{depth}+1$}
    }
    \If{$\text{CorrectDifferenceArray}(\icr_1(s))$ and $\at{\icr_1(s)}{n-1} = \text{depth}$}{
        \Return{$\icr_2(s), \text{depth}-1$}
    }
    \Return{$\icr_1(s), \text{depth}$}
\end{algorithm}

The algorithm constructs a de Bruijn sequence of order $n$ in the alphabet $\Sigma$.

Recall that we defined the transition rule $\nxt$ as follows,

\[
    \nxt(s) = \begin{cases}
        \icr_0(s) & \text{ if } \icr_0(s) \in \reps\\
        \icr_2(s) & \text{ if } \icr_1(s) \in \reps\\
        \icr_1(s) & \text{ otherwise.}
    \end{cases}
\]

For the algorithm to run in $O(n)$ space, 
we cannot maintain the tree in memory. Instead, we use the definition of the Explicit Tree ${\mathbf T}$ and we give an efficient procedure to compute membership to the set $\reps$.
This has two parts:
\begin{itemize}
    \item The difference array of the string must be of the correct form. That is, it should be the lexicographically minimal rotation, shifted so that the first non-zero element is at the end. To be a representative, $s$ also must not be in the root orbit, which we can also check with the difference array.
    \item The last symbol of the string must be equal to the depth of the orbit modulo $|\Sigma|$.
\end{itemize}

The first part is implemented in Algorithm~\ref{alg:isrep}. The second part is implemented together with the transition function in Algorithm~\ref{alg:transition}. Note that both functions have linear complexity in both time and memory. To determine whether $d_2$ is the lexicographically minimal rotation we can use Booth's algorithm~\cite{Booth}, for example.

\subsection{Example Sequences}
In Table~\ref{table:tex} and Figure~\ref{fig:ex} (left) we display the output of our algorithm.

\begin{table}[h!]
\centering
\begin{tabular}{|c|c|l|}
\hline
$|\Sigma|$ & \textbf{n} & \textbf{Resulting Sequence}\\
\hline
$2$ & $2$ & \texttt{1100}\\
\hline
$2$ & $3$ & \texttt{11101000}\\
\hline
$2$ & $4$ & \texttt{1111001011010000}\\
\hline
$2$ & $5$ & \texttt{11111000101011001001101110100000}\\
\hline
$2$ & $6$ & \texttt{1111110001001100111011000010110101001010111001000110111101000000}\\
\hline
$3$ & $2$ & \texttt{112102200}\\
\hline
$3$ & $3$ & \texttt{111212020101221002110222000}\\
\hline
$4$ & $2$ & \texttt{1121320310223300}\\
\hline
$4$ & $3$ & \texttt{1112123230201312023130301012213320021132203310321003110222333000}\\
\hline
\end{tabular}
\caption{Example sequences produced by our algorithm.}
\label{table:tex}
\end{table}

\section{Analysis}
\subsection{Our Algorithm Behavior for an Arbitrary Base}
For base $|\Sigma|=2$, we know our algorithm produces de Bruijn sequences of discrepancy exactly~$n$, because that is both the upper bound and the trivial lower bound. For the case $n=1$, we also know our algorithm produces a sequence of discrepancy exactly $n$, because all de Bruijn sequences for $n=1$ have discrepancy $n$. Therefore, in both of these cases our algorithm is optimal.


However, for the case $|\Sigma| > 2$ and $n>1$, it is still unknown whether the minimum attainable discrepancy is $n$ or $n+1$.
Our algorithm produces a de Bruijn sequence with discrepancy exactly $n+1$ (there is a  proof, not included in this paper, that shows our method does not achieve discrepancy equal to~$n$).




\subsection{Conjecture on the Minimum Attainable Discrepancy}

\begin{table}[h!]
\centering
\begin{tabular}{|c|*{8}{c|}}
\hline
\diagbox{\textbf{$|\Sigma|$}}{\textbf{n}} & \textbf{1} & \textbf{2} & \textbf{3} & \textbf{4} & \textbf{5 } & \textbf{6} & \textbf{7}\\
\hline
\textbf{2} & n & n & n & n & n & n & n \\
\hline
\textbf{3} & n & n+1 & n & n &  &  &  \\
\hline
\textbf{4} & n & n+1 & n &  &  &  &  \\
\hline
\textbf{5} & n & n &  &  &  &  &  \\
\hline
\textbf{6} & n & n &  &  &  &  &  \\
\hline
\textbf{7} & n & n &  &  &  &  &  \\
\hline
\textbf{8} & n & n &  &  &  &  &  \\
\hline
\end{tabular}
\caption{Each cell has the minimum discrepancy attainable for a de Bruijn sequence with the corresponding parameters. The cases that could not be computed are blank.}
\label{table:order_base}
\end{table}

To compare the behaviour of the algorithm with the actual minimum discrepancy attainable, we  made an exhaustive search for the smallest discrepancy that can be obtained with certain parameters of $n$ and $|\Sigma|$, to decide whether our algorithm was optimal or not. The results we obtained can be seen in Table~\ref{table:order_base}.
The experimentation is heavily limited due to the doubly exponential nature of the search. 

Based on the experimental results and
 given that the de Bruijn graph gets more interconnected with increasing $n$ and $|\Sigma|$, one can  expect that achieving a low-discrepancy de Bruijn sequence becomes easier as these parameters increase.
We put forward the following conjecture.

\begin{conjecture}
 The minimal 
 discrepancy of a de Bruijn sequence of order $n$ and alphabet $\Sigma$ is $n$, with the exception of the cases   $ n = 2$ with $|\Sigma| = 3 $ and  $n = 2$ with $|\Sigma| = 4 $.
\end{conjecture}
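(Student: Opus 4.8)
The conjecture is purely an upper-bound statement: the run of $n$ equal symbols that must occur in any de Bruijn sequence already forces discrepancy at least $n$ in every case. Several cells are settled. Theorem~\ref{thm:main} gives discrepancy exactly $n$ for every $n$ when $|\Sigma|=2$, and for $n=1$ every de Bruijn sequence has discrepancy $1=n$; the two exceptional cells $(n,|\Sigma|)=(2,3)$ and $(2,4)$, where the minimum is $n+1$, are established by the exhaustive search recorded in Table~\ref{table:order_base}. So what remains is to exhibit, for every $n\geq 3$ with $|\Sigma|\geq 3$ and for $n=2$ with $|\Sigma|\geq 5$ (beyond the finitely many cells already checked), a Hamiltonian cycle in $B_n$ whose associated sequence has discrepancy $n$.

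My plan is to keep the architecture of the proof of Theorem~\ref{thm:main} — decompose $B_n$ into node-disjoint cycles, join them along a spanning tree into a Hamiltonian cycle, and control the discrepancy through a telescoping potential — while removing the single unit of slack that currently forces $n+1$ once $|\Sigma|\geq 3$. In the present construction the difference along a path $s_0,\dots,s_k$ is governed by $\Di(\Pa(\Hi(s_0)+e_{d_{s_0}})-\Pa(\Hi(s_k)+e_{d_{s_k}}))$, and the piece $\Pa(e_{d_{s_0}}-e_{d_{s_k}})$ contributes a full unit exactly when the endpoints need not share a symbol. For $|\Sigma|=2$ the only such pair is $0^n,1^n$, which happen to lie in one $\icr$-orbit and hence receive equal depth, killing the term; for $|\Sigma|\geq 3$ there are too many pairwise-non-sharing words — already the $|\Sigma|$ constant words, and typically more — to place them all in a single orbit, and the paper's own (unincluded) argument confirms that this choice of potential cannot reach $n$. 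I would therefore either (a) replace $s\mapsto \Hi(s)+e_{d_s}$ by a more general potential $\varphi\colon\Sigma^n\to(\Sigma\to\mathbb{Z})/\langle K\rangle$ whose first differences along the register arcs still reconstruct the prefix-sum histogram of the traversed first symbols, but for which $\Di(\Pa(\varphi(s)-\varphi(t)))\leq n$ holds for \emph{all} pairs $s,t$; or (b) replace $\icr$ by another bijective register rule (say one that composes a pure rotation with increments on a restricted set of positions, or interleaves two such rules) whose induced cycles and connecting tree arcs all lie in a subgraph on which the bound $n$ is automatic. Option (a) seems cleanest because it isolates precisely the object that must be built: a potential with globally bounded pairwise prefix-sum difference.

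For the base of the recursion I would settle $n=2$, $|\Sigma|\geq 5$ directly, organizing the few small families of pairwise-non-sharing length-two words so that a depth assignment satisfying the hypothesis of Theorem~\ref{thm:bound2} exists; the failure of this for $|\Sigma|\in\{3,4\}$ is exactly what makes those the exceptional cells. For $n\geq 3$ I would look for an inductive step: pass from a discrepancy-$(n-1)$ de Bruijn sequence of order $n-1$ to order $n$, or, more directly, build a depth assignment on $\Sigma^n$ that is constant on unions of $\icr$-orbits closed under $a\mapsto a+1$ and ``wide'' enough that any two orbits of different depth share a symbol. Since $B_n$ becomes more interconnected as $n$ and $|\Sigma|$ grow, such assignments should exist outside the two small exceptional cells; one could even hope for a counting or probabilistic argument showing that an admissible assignment — or directly a discrepancy-$n$ Hamiltonian cycle — must exist once $n$ or $|\Sigma|$ is large enough, reducing the conjecture to a finite verification.

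The main obstacle is simultaneity. One needs a potential — equivalently a cycle decomposition together with a depth assignment — that at once (i) is constant on the pieces of the decomposition, (ii) makes every inter-piece connecting arc valid, so that a spanning tree and hence a Hamiltonian cycle still exist, and (iii) satisfies $\Di(\Pa(\varphi(s)-\varphi(t)))\leq n$ for every pair of words. The paper already shows the natural choice cannot meet (i) and (iii) together for $|\Sigma|\geq 3$, so the crux is to find a substitute that works \emph{uniformly} in $n$ and $|\Sigma|$ rather than cell by cell, probably by abandoning the ``add a single indicator $e_{d_s}$'' shape of the potential in favour of one tuned to the combinatorics of non-sharing word families. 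It is quite possible that no single clean construction covers all parameters, in which case the conjecture would be attacked by combining a construction valid for $n$ or $|\Sigma|$ large with a finite check of the remaining cells.
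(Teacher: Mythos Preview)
This statement is presented in the paper as a \emph{conjecture}, not a theorem: the paper offers no proof, only the experimental evidence of Table~\ref{table:order_base} together with the heuristic that $B_n$ becomes more interconnected as $n$ and $|\Sigma|$ grow. There is therefore no proof in the paper to compare your attempt against.

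Your submission is, correspondingly, not a proof but a research outline, and you are candid about this. The diagnosis is accurate: the extra unit in the $|\Sigma|\geq 3$ case comes from the term $\Pa(e_{d_s}-e_{d_t})$, and with the present potential that term cannot be killed for all relevant pairs because too many words are pairwise symbol-disjoint. Your two suggested fixes --- a richer potential $\varphi$, or a different bijective register rule --- are the natural avenues.

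What is missing is any concrete object. You do not exhibit a potential $\varphi$ with $\Di(\Pa(\varphi(s)-\varphi(t)))\leq n$ for all $s,t$, nor a register rule whose cycles and joining arcs sit inside such a subgraph, nor even a worked $n=2$, $|\Sigma|=5$ example. On that last point your plan has a specific problem: for $n=2$ and $|\Sigma|\geq 4$ the ``non-sharing'' relation is already so dense (for instance $01,23,45,\dots$ are mutually disjoint) that the hypothesis of Theorem~\ref{thm:bound2} forces all such words to the same depth, and chasing this tends to collapse $d$ to a constant. But with $d$ constant the second clause in the definition of a valid arc becomes $b+1=d$ and $c=d$, which is already a special case of the first clause $b+1=c$, so the valid subgraph reduces to the $\icr$-subgraph itself --- a disjoint union of cycles with no Hamiltonian cycle. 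Thus even the base case you label ``direct'' requires a genuinely new idea, not just a choice of depths. Until some such construction is actually produced, this remains a plan of attack on an open problem, not a proof.
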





\subsection{Comparison with Other de Bruijn Sequences}

\begin{table}
\centering
\begin{tabular}{|c|*{16}{c|}}
\hline
\textbf{n} & \textbf{Our Algorithm} & \textbf{Huang} & \textbf{Random} & \textbf{Weight-range}\\
\hline
$10$ & $10$ & $12$ & $50$ & $131$\\
\hline
$11$ & $11$ & $13$ & $71$ & $257$\\
\hline
$12$ & $12$ & $15$ & $101$ & $468$\\
\hline
$13$ & $13$ & $16$ & $143$ & $930$\\
\hline
$14$ & $14$ & $18$ & $203$ & $1723$\\
\hline
$15$ & $15$ & $19$ & $288$ & $3439$\\
\hline
$16$ & $16$ & $21$ & $407$ & $6443$\\
\hline
$17$ & $17$ & $22$ & $575$ & $12878$\\
\hline
$18$ & $18$ & $24$ & $815$ & $24319$\\
\hline
$19$ & $19$ & $25$ & $1157$ & $48629$\\
\hline
$20$ & $20$ & $27$ & $1634$ & $92388$\\
\hline
$21$ & $21$ & $28$ & $2311$ & $184766$\\
\hline
$22$ & $22$ & $30$ & $3264$ & $352727$\\
\hline
$23$ & $23$ & $31$ & $4565$ & $705443$\\
\hline
$24$ & $24$ & $33$ & $6252$ & $1352090$\\
\hline
$25$ & $25$ & $35$ & $9192$ & $2704168$\\
\hline
$26$ & $26$ & $36$ & $13074$ & $5200313$\\
\hline
$27$ & $27$ & $38$ & $17933$ & $10400613$\\
\hline
$28$ & $28$ & $40$ & $22672$ & $20058314$\\
\hline
$29$ & $29$ & $41$ & $34591$ & $40116614$\\
\hline
$30$ & $30$ & $43$ & $57357$ & $77558775$\\
\hline
\end{tabular}
\caption{Discrepancy attained by various algorithms for binary de Bruijn sequences. The data for \textbf{Huang}, \textbf{Random} and \textbf{Weight-range} is taken from~\cite{Sawada}.}
\label{table:art}
\end{table}

\begin{figure}[b!!]
\centering
\includegraphics[width=0.30\textwidth]{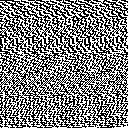}
\hspace{0.02\textwidth}
\includegraphics[width=0.30\textwidth]{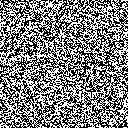}
\hspace{0.02\textwidth}
\includegraphics[width=0.30\textwidth]{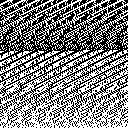}
\caption{Graphical representation for the binary  sequences of order $n=14$ produced by our algorithm (left), a de Bruijn sequence chosen uniformly at random (center) and the de Bruijn sequence with  asymptotically maximum discrepancy in~\cite{Sawada} (right). 
The symbols of the sequence are displayed in row-major order. Zero is white, one is black. }
\label{fig:ex}
\end{figure}

In Table~\ref{table:art} we compare the discrepancy obtained with our algorithm in a binary alphabet with that  obtained in previously-described algorithms in the literature:

\begin{itemize} \setlength\itemsep{0.1em}
    \item Huang's algorithm~\cite{Huang} which attained the previously smallest known discrepancy.
    \item A uniformly random de Bruijn sequence.
    
        \item The Weight-range construction~\cite{Sawada}, which is proven to have the asymptotically maximum discrepancy possible.
\end{itemize}

\clearpage
\section{Code}
We include a full \texttt{C++} implementation of the proposed algorithm.
\begin{multicols}{2}
\begin{lstlisting}[language=C++]
#include <vector>
#include <algorithm>
#include <iostream>
using namespace std;

// n is the order of the de Bruijn string
int n;

// base is the size of the alphabet
int base;

// Checks if s is its lexicographically minimal
// rotation in O(n). Modified Duval.
// Implementation from:
//   https://stackoverflow.com/a/73966629
bool min_lex(const vector<int> &s) {
    for(int i=0, j=1; j < 2*n; j++) {
        int a = s[i%n], b = s[j%n];
        if(b < a) return false;
        else if(a < b) i = 0;
        else i++;
    }
    return true;
}

// The difference array of s, defined as
//   res[i] = s[i-1] - s[i]
// with the convention that s[-1] = s[n-1]-1
// this ensures ICR induces cyclic shift of
// difference array
vector<int> diff_array(const vector<int> &s) {
    vector<int> res(n);
    for(int i = 0; i < n; i++) {
        res[i] = s[(i-1+n)%n] - s[i] - (i==0);
        res[i] = (res[i] + base) % base;
    }
    return res;
}

// Incremented Cyclic Register rule
vector<int> icr(const vector<int> &s, int k) {
    vector<int> t = s;
    rotate(t.begin(), t.begin() + 1, t.end());
    t.back() = (t.back() + int(k))%base;
    return t;
}
\end{lstlisting}
\columnbreak
\begin{lstlisting}[language=C++]
// Checks if s is in Reps.
// This is true iff the difference array of s
// is positioned such that the lexicographically
// minimal rotation ends its first run of zeros
// at the last element.
// This ensures that applying ICR_0^-1 to s
// decreases the first element after the run.
bool reps(const vector<int> &s, int depth) {
    auto da = diff_array(s);
    if(da.back() == 0) return false;

    int i = n-1;
    while(i != 0 && da[i-1] == 0) i--;
    if(i == 0) return false;

    rotate(da.begin(), da.begin()+i, da.end());
    if(!min_lex(da)) return false;

    return (depth-s.back()) % base == 0;
}

// Generate the full de Bruijn sequence,
// with the transition function defined by
//    P(s) = ICR_0(s) if ICR_0(s) is in Reps
//           ICR_2(s) if ICR_1(s) is in Reps
//           ICR_1(s) otherwise
vector<int> generate() {
    vector<int> result;
    vector<int> s(n, 0);
    int depth = 0;
    do {
        auto a = icr(s, 0);
        auto b = icr(s, 1);
        auto c = icr(s, 2);
        if(reps(a, depth+1)) s = a, depth++;
        else if(reps(b, depth)) s = c, depth--;
        else s = b;
        result.push_back(s.back());
    } while(s != vector<int>(n, int(0)));
    return result;
}

int main() {
    cout << "enter n and base: " << flush;
    cin >> n >> base;
    auto result = generate();

    for(int i : result) cout << i;
    cout << endl;
}
\end{lstlisting}
\end{multicols}
\newpage

\bibliographystyle{plain}
\bibliography{optimal}
\medskip

{\small
\noindent
Nicol\'as Álvarez \\
  ICC CONICET Argentina -  {\tt  nico.alvarez@gmail.com}
\medskip

\noindent
Ver\'onica Becher \\
 Departamento de  Computaci\'on, Facultad de Ciencias Exactas y Naturales \& ICC  \\
 Universidad de Buenos Aires \&  CONICET  Argentina-  {\tt  vbecher@dc.uba.ar}
\medskip

\noindent
Martín Mereb \\
 Departamento de Matemática, Facultad de Ciencias Exactas y Naturales \& IMAS \\
 Universidad de Buenos Aires \&  CONICET Argentina-  {\tt  mmereb@gmail.com}
\medskip

\noindent
Ivo Pajor\\
 Departamento de Computaci\'on, Facultad de Ciencias Exactas y Naturales \\
 Universidad de Buenos Aires \&  Argentina-  {\tt  pajorivo@gmail.com}
\medskip

\noindent
Carlos Miguel Soto\\
 Departamento de Computaci\'on, Facultad de Ciencias Exactas y Naturales \\
 Universidad de Buenos Aires \&  Argentina-  {\tt  miguelsotocarlos@gmail.com}
}
\end{document}